\definecolor{LinksColor}{rgb}{0,0.2,0.8}
\newcommand{\lref}[2][]{\hyperref[#2]{#1~\ref*{#2}}}
\newcommand{\myparagraph}[1]{{\medskip\noindent{\bf #1}}}
\newcommand{\emmedparagraph}[1]{{\medskip\noindent\emph{#1}}}
\newcommand{\etal}{{\em et al.}}
\newcommand{\mycase}[1]{\mbox{{\underline{Case #1}}:\/}}
\newcommand{\JG}{{\cal J}_G}
\newcommand{\mbfx}{\boldsymbol{x}}
\newcommand{\calJ}{{\cal J}}
\newcommand{\calI}{{\cal I}}
\renewcommand{\E}{\mathbf{E}}
\newcommand{\hatmbfx}{\hat{\boldsymbol{x}}}
\newcommand{\vertbar}{{\Big|}}
\newcommand{\braced}[1]{{ \left\{ #1 \right\} }}
\newcommand{\suchthat}{{\;:\;}}
\newcommand{\Exp}{\mbox{\bf{E}}}
\newcommand{\Prob}{\textrm{Pr}}
\newcommand{\floor}[1]{{\lfloor #1\rfloor}}
\newcommand{\nat}{\mathbb{N}}
\newcommand{\rational}{\mathbb{Q}}
\newcommand{\reals}{{\mathbb R}}
\newcommand{\JRP}{{\mbox{JRP}}}
\newcommand{\JRPD}{{\mbox{JRP-D}}}
\newcommand{\Efour}{{\text{E4}}}
\newcommand{\JRPDEfour}{\JRPD_{\Efour}}
\newcommand{\VG}{{\mbox{VG}}}
\newcommand{\EG}{{\mbox{EG}}}
\newcommand{\SG}{{\mbox{SG}}}
\newcommand{\llQ}{Q^0}
\newcommand{\lQ}{Q^1}
\newcommand{\rQ}{Q^2}
\newcommand{\rrQ}{Q^3}
\newtheorem{theorem}{Theorem}
\newtheorem{lemma}{Lemma}
\newcommand{\OPT}{\textsc{Opt}}
\newcommand{\half}{{\textstyle\frac{1}{2}}}
\newcommand{\onehalf}{{\textstyle\frac{1}{2}}}
\newcommand{\threehalves}{{\textstyle\frac{3}{2}}}
\newcommand{\onethird}{{\textstyle\frac{1}{3}}}
\newcommand{\twothirds}{{\textstyle\frac{2}{3}}}
\newcommand{\fivesixths}{{\textstyle\frac{5}{6}}}
\newcommand{\demands}{{\cal D}}
\newcommand{\COST}{\textsf{C}}
\newcommand{\cost}{\textsf{c}}
\newcommand{\costVector}{\cost}
\newcommand{\costof}[1]{\cost_{#1}}
\newcommand{\costfn}[1]{\textsf{cost}(#1)}
\newcommand{\universe}{{\cal U}}  
\newcommand{\maxDeadline}{{\mbox{\small $U$}}}  
\renewcommand{\maxDeadline}{U}  
\newcommand{\duration}[1]{\delta_{#1}}
\newcommand{\confDeadline}{\Delta}  
\newcommand{\distribution}{p} 
\newcommand{\statistic}[1]{{\cal Z}(#1)} 
\newcommand{\EWaste}[2]{{\cal W}(#1,#2)}  
\newcommand{\threshold}{z} 
\newcommand{\round}{\mbox{\sf Round}}  
\newcommand{\sample}[1]{s_{#1}}   
\newcommand{\state}[1]{\psi_{#1}} 
\newcommand{\GOTstop}{I} 
\newcommand{\tmpRHO}{\rho}
\newcommand{\AREAFN}{\omega_{\tmpRHO}}  
\renewcommand{\AREAFN}{\tau_{\!\tmpRHO}}  
\newcommand{\AREA}[1]{\AREAFN({#1})}  
\newcommand{\MEASFN}{\sigma}  
\newcommand{\MEAS}[1]{\MEASFN({#1})}  
\newcommand{\LOTstop}{J_{\tmpRHO}} 
\newcommand{\adjustedcost}{{\textit{AC}}} 
\begin{document}

\title{Approximation Algorithms for \\ the Joint Replenishment Problem 
  with Deadlines\thanks{%
	The final publication appeared in Journal of Scheduling and is available at Springer via 
	\url{http://dx.doi.org/10.1007/s10951-014-0392-y}.
    Research supported by NSF grants CCF-1217314, CCF-1117954, OISE-1157129;
    EPSRC grants EP/J021814/1 and EP/D063191/1;
    FP7 Marie Curie Career Integration Grant;
    Royal Society Wolfson Research Merit Award; and 
	Polish National Science Centre grant DEC-2013/09/B/ST6/01538.
 }
}

\author{Marcin Bienkowski\thanks{%
			Institute of Computer Science, 
			University of Wroc{\l}aw, Poland.
			}
		\and
		Jaros{\l}aw Byrka\footnotemark[2]
		\and
		Marek Chrobak\thanks{%
			Department of Computer Science,
         	University of California at Riverside, USA.
			}
		\and
		 Neil Dobbs\thanks{%
				IBM T.J. Watson Research Center, Yorktown Heights, USA.
			}
		\and
		Tomasz Nowicki\footnotemark[4]		
		\and
		Maxim Sviridenko\thanks{%
			Department of Computer Science,
			University of Warwick, UK.
			}
		\and
		Grzegorz {\'S}wirszcz\footnotemark[4]
		\and
		Neal E.~Young\footnotemark[3]
}

\maketitle


\begin{abstract}
The Joint Replenishment Problem ($\JRP$) is a fundamental optimization problem in 
supply-chain management, concerned with optimizing the flow of goods
from a supplier to retailers. Over time, in response to demands at the retailers,
the supplier ships orders, via a warehouse, to the retailers.   
The objective is to schedule these orders to minimize the sum of ordering costs and 
retailers' waiting costs.

We study the approximability of $\JRPD$, the version of $\JRP$ with deadlines,
where instead of waiting costs the retailers impose strict deadlines. We study
the integrality gap of the standard linear-program (LP) relaxation, giving a
lower bound of $1.207$, a stronger, computer-assisted lower bound of $1.245$,
as well as an upper bound and approximation ratio of $1.574$. The best
previous upper bound and approximation ratio was $1.667$; no lower bound was
previously published. For the special case when all demand periods are of
equal length we give an upper bound of $1.5$, a lower bound of $1.2$, and show
APX-hardness.
\end{abstract}


\section{Introduction}
\label{sec: introduction}


The \emph{Joint Replenishment Problem with Deadlines} ($\JRPD$) is an optimization 
problem in supply-chain management concerned with scheduling shipments
(orders) of a commodity from a supplier, via a shared warehouse, 
to satisfy prior demands at $m$ retailers (cf.~\lref[Figure]{fig:problem}).
The objective is to find a schedule of orders that satisfies all
demands before their deadlines expire, while minimizing the total ordering cost.

Specifically, an instance of $\JRPD$ is given by a tuple $(\COST, \costVector, \demands)$
where 
\begin{itemize}
	\item $\COST \in \rational$ is the \emph{warehouse ordering cost};
	\item $\costVector$ is the vector of \emph{retailer ordering costs}, where
		for each retailer $\rho\in\{1,2,\ldots,m\}$ its ordering cost is
		$\costof{\rho} \in \rational$;
	\item $\demands$ is a set of $n$ \emph{demands}, with each demand represented
	 	by a triple $(\rho,r,d)$, 
		where $\rho$ is the retailer that issued the demand, 
		$r\in\rational$ is the demand's release time 
		and $d\in\rational$ is its deadline.  
\end{itemize}
For a demand $(\rho,r,d)$,
the interval $[r,d]$ is called the demand \emph{period}\footnote{Note: our use of the term
``period'' is different from its use in operations research literature on supply-chain
management problems.}.  
In sections that prove upper bounds
we assume (without loss of generality by time scaling) that $r,d\in [2n]$,
where $[i]$ denotes $\{1,2,\ldots,i\}$.

A solution (also called a \emph{schedule}) is a set of \emph{orders},
each specified by a pair $(t,R)$, where $t$ is the time of the order 
and $R$ is a subset of the retailers.
An order $(t,R)$ \emph{satisfies} those demands $(\rho,r,d)$
whose retailer is in $R$ and whose demand period contains $t$
(that is, $\rho \in R$ and $t\in [r, d]$).
A schedule is \emph{feasible} if all demands are satisfied by some order
in the schedule.

The \emph{cost} of order $(t,R)$ is 
the ordering cost of the warehouse plus the ordering costs of respective
retailers, i.e., $\COST + \sum_{\rho \in R} \costof{\rho}$. 
It is convenient to think of
this order as consisting of a warehouse order of cost $\COST$, which is then joined
by each retailer $\rho\in R$ at cost $\costof{\rho}$. 
The cost of the schedule is the sum of the costs of its orders.
The objective is to find a feasible schedule of minimum cost.


\begin{figure}[t]
\centering
\includegraphics[scale=0.75]{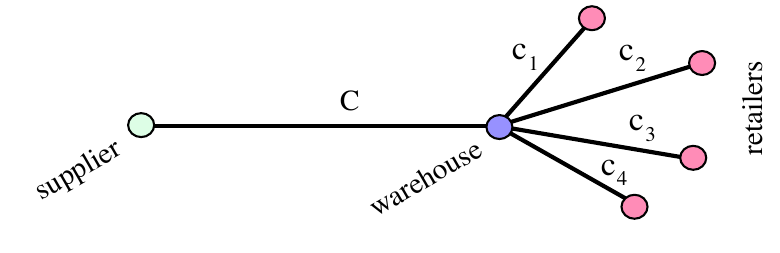}
\caption{An instance with four retailers, represented by a tree with ordering costs as weights assigned to its edges. 
  The cost of an order is the total weight of the subtree connecting the supplier and the involved retailers.
} 
\label{fig:problem}
\end{figure}


\myparagraph{Previous results.}
The decision variant of {\JRPD} was shown to be strongly $\NP$-complete
by Becchetti~{\etal}~\cite{packet-aggregation-becchetti}.  
(They considered an equivalent problem of packet aggregation with deadlines on two-level trees.) 
Nonner and Souza~\cite{jrp-deadlines-nonner} then showed that
{\JRPD} is $\APX$-hard, even if each retailer issues only three demands.
Levi, Roundy and Shmoys~\cite{jrp-levi-2-approx} gave a \mbox{$2$-approximation} algorithm based on a primal-dual scheme.
Using randomized rounding,
Levi~{\etal}~\cite{jrp-owmr-levi-journal,jrp-owmr-levi-approx}
(building on \cite{jrp-owmr-levi-soda})
improved the approximation ratio to 1.8;
Nonner and Souza~\cite{jrp-deadlines-nonner}
reduced it further to $5/3$.
These results use a natural linear-program (LP) relaxation, which we use too.

The randomized-rounding approach from~\cite{jrp-deadlines-nonner}
uses a natural rounding scheme whose analysis 
can be reduced to a probabilistic game.
For any probability distribution $\distribution$ on $[0,1]$,
the integrality gap of the LP relaxation is at most $1/\statistic \distribution$,
where $\statistic \distribution$ is a particular statistic of $\distribution$
(see \lref[Lemma]{lemma: round}).
The challenge in this approach is to find a distribution where $1/\statistic \distribution$ is small.
Nonner and Souza show that there is a distribution $\distribution$ with 
$1/\statistic \distribution \le 5/3 \approx 1.67$.
As long as the distribution can be sampled from efficiently,
the approach yields a polynomial-time $(1/\statistic \distribution)$-approximation algorithm.


\myparagraph{Our contributions.}
We prove that there is a distribution $\distribution$ with $1/\statistic \distribution\le 1.574$.
We present this result in two steps: 
we show the bound $e/(e-1) \approx 1.58$ with a simple and elegant analysis, 
then improve it to $1.574$ by refining the underlying distribution.
This shows that the integrality gap is at most $1.574$
and it gives a $1.574$-approximation algorithm.
We also prove that the LP integrality gap is {\em at least} $1.207$ and
we provide a computer-assisted proof that this gap is at least $1.245$.
As far as we know, no explicit lower bounds have been previously published.

For the special case when all demand periods have the same length 
(as occurs in applications where time-to-delivery is globally standardized)
we give an upper bound of 1.5, a lower bound of 1.2, 
and show $\APX$-hardness.


\myparagraph{Other related work.}
{\JRPD} is a special case of the Joint Replenishment Problem ({\JRP}).  
In {\JRP}, instead of having a deadline, each demand is associated with a delay-cost
function that specifies the cost for the delay between the time the demand is
released and the time it is satisfied by an~order. {\JRP} is $\NP$-complete, even if the 
delay cost is linear~\cite{jrp-arkin,jrp-deadlines-nonner}.  {\JRP} is in
turn a special case of the One-Warehouse Multi-Retailer (OWMR) problem, 
where the commodities may be stored at the warehouse for a~given cost per time unit.
The $1.8$-approximation by Levi~{\etal}~\cite{jrp-owmr-levi-approx} holds also for OWMR. 
{\JRP} was also studied in the online scenario: a $3$-competitive algorithm was given
by Buchbinder~{\etal}~\cite{jrp-online-buchbinder}  (see also \cite{aggregation-bkv}).

The $\JRP$ model is an abstraction of a number of other optimization problems 
that arise in supply-chain management. It is often presented as an inventory-management 
problem, where all demands need to be satisfied immediately from the
current inventory. In that scenario, orders are issued to replenish the
inventory, ensuring that all future demands are met. (In contrast, in our model 
the orders are issued to satisfy \emph{past} demands and there is no inventory.)
Depending on the
application, orders can represent deliveries (via a shared warehouse), or a
manufacturing process that involves a joint set-up cost and 
individual set-up costs for retailers. The objective is to
minimize the total cost, defined as the sum of ordering costs and inventory
holding costs. 

Another generalization of {\JRP} involves a tree-like structure with
the supplier in the root and retailers at the leaves,
modeling control packet aggregation in computer networks.
A $2$-approximation is known for the variant with deadlines~\cite{packet-aggregation-becchetti};
the case of linear delay costs has also been studied~\cite{khanna-message-aggregation,aggregation-bkv}.
Recently, L.~Chaves (private communication) has shown that
the generalization of {\JRP} to arbitrary trees, even for arbitrary waiting cost functions, 
can be approximated within a factor of $2$ through a reduction to the
multi-stage assembly problem, see~\cite{jrp-levi-2-approx}.



\section{Upper Bound of 1.574}
\label{sec: upper bounds}



In this section we derive our approximation algorithms for $\JRPD$, showing an
approximation ratio of $e/(e-1)\approx 1.58$, which we then improve to $1.574$.
Both algorithms are based on randomized LP-rounding.


\myparagraph{The LP relaxation.}
For the rest of this section, fix an arbitrary instance
$\calI = (\COST, \costVector,\demands)$ of {\JRPD}.
Let finite set $\universe\subset\rational$ contain the release times and deadlines.
Here is the standard LP relaxation of the problem:

\begin{alignat}{3}
	\textrm{minimize}\quad\quad  \costfn{\mbfx} \;&=\;  
        \textstyle \sum_{t\in\universe} {(\COST\,x_{t} + \sum_{\rho=1}^m{\costof{\rho}\, x^\rho_t})}
											\hspace{-1.3in}&&
	\notag \\
	\textrm{subject to}\quad\quad\quad\quad 
			\textstyle x_t, x^\rho_t \;&\ge 0\;
											&	&\textrm{for all}\; t \in \universe, \rho \in \{1, \ldots, m\}
	\notag
        \\
        \textstyle	x_t  \;&\geq\;  x^\rho_t 
	 										&	&\textrm{for all}\; t \in \universe, \rho \in \{1, \ldots, m\}
	\label{eqn: density bound}
	\\
			 \sum_{t\in \universe\cap[r,d]} {x^\rho_t} \;&\geq 1\;
											&	&\textrm{for all}\; (\rho,r,d) \in \demands.
	\label{eqn: satisfy bound}
\end{alignat}

\myparagraph{The statistic $\statistic \distribution$.}
Let $\distribution$ be a probability distribution on $[0,1]$. As we are about to show,
the approximation ratio of algorithm $\round_\distribution$ (defined below)
and the integrality gap of the LP
are at most $1/\statistic \distribution$, where $\statistic \distribution$ 
is defined by the following so-called {\em tally game} (following \cite{jrp-deadlines-nonner}).
To begin the game, fix any {\em threshold} $\threshold \ge 0$,
then draw a sequence of independent samples $\sample 1,\sample 2,\ldots,\sample h$ 
from $\distribution$, stopping when their sum exceeds $\threshold$,
that is when $\sample 1+\sample 2+\ldots+\sample {h} > \threshold$.
Call $\threshold- (\sample 1+\sample 2+\ldots+\sample {h-1})$ the {\em waste}.
Note that, since the waste is less than $\sample h$, it is in $[0,1)$.
Let $\EWaste \distribution \threshold$ denote the expectation of the waste.
Abusing notation, let $\Exp[\distribution]$ denote the expected value of a~single sample drawn from~$\distribution$.
Then $\statistic \distribution$ is defined by

\begin{equation*}
\statistic \distribution = \min\Big\{\, \Exp[\distribution]\,,\, 
		1-\sup_{\threshold \ge 0} \EWaste \distribution \threshold \,\Big\}.
\end{equation*}

Note that the distribution $\distribution$ that chooses $\half$ with probability 1 has 
$\statistic \distribution = \half$.
The challenge is to simultaneously increase $\E[p]$ and reduce the maximum expected waste.


\myparagraph{A generic randomized-rounding algorithm.}
The upper bound of 1.574 relies on a randomized-rounding algorithm, $\round_\distribution$.
The algorithm is parameterized by an arbitrary probability distribution $\distribution$ on $[0,1]$
and gives a $(1/\statistic \distribution)$-approximation:

\begin{lemma}\label{lemma: round}
For any distribution $\distribution$ on $[0,1]$ and fractional LP solution $\mbfx$,
if $\statistic p > 0$,
then with probability 1, Algorithm~$\round_\distribution(\COST,\costVector,\demands,\mbfx)$ 
returns a feasible schedule.
The expected cost of the schedule is at most $\costfn{\mbfx}/\statistic \distribution$.
\end{lemma}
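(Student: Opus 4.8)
The plan is to bound the expected cost of the returned schedule by analyzing the warehouse ordering cost and each retailer's ordering cost separately against the corresponding terms $\COST \cdot \hat\maxDeadline$ and $\cost \cdot \sum_t x^\rho_t$ of $\costfn{\mbfx}$. First I would argue feasibility: for any demand $(\rho, r, d)$, the demand period $[r,d]$ is covered by some order in the schedule. By LP constraint \eqref{eqn: satisfy bound}, $\sum_{t=r}^d x^\rho_t \ge 1$, which in the $\AREAFN$-coordinates means the interval between $\sum_{t=1}^{r-1} x_t$ and $\sum_{t=1}^{d} x_t$ has $\AREAFN$-length at least $1$; the greedy construction of $\LOT$ in Step 2 is designed precisely so that every such unit-$\AREAFN$-length interval contains a local cutoff, and Step 3 translates cutoffs back to times $t_i \in [r,d]$. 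This part is essentially bookkeeping, though one must be careful at the boundaries (the role of the "$-1$" in the definition of $\GOTstop$ and the stopping rule for $\LOTstop$) — this is the "technical subtlety" the paper alludes to.

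Next I would bound the expected number of warehouse orders, $\E[\GOTstop]$. The cutoffs $\GOTt1, \GOTt2, \ldots$ are a renewal process with step distribution $\distribution$, stopped the first time the partial sum reaches $\hat\maxDeadline - 1$. By Wald's identity (or a direct renewal argument), $\E[\GOTt{\GOTstop}] = \E[\distribution]\cdot \E[\GOTstop]$, and since $\GOTt{\GOTstop} \le \hat\maxDeadline - 1 + \sample{\GOTstop} \le \hat\maxDeadline$ — actually I want an upper bound on $\E[\GOTstop]$, so I use $\GOTt{\GOTstop} \ge \hat\maxDeadline - 1$ together with the tally-game viewpoint: $\GOTstop - 1$ is the number of samples before the sum exceeds the threshold $\threshold = \hat\maxDeadline - 1$, and the expected "overshoot beyond the previous partial sum" relates to the waste. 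I'd phrase it as: either $\E[\GOTstop] \le \hat\maxDeadline / \E[\distribution]$ directly (when we use the $\E[\distribution]$ branch of $\statistic\distribution$), or $\E[\GOTstop]$ is bounded using $1 - \sup_\threshold \EWaste\distribution\threshold$ via the identity $\E[\GOTt{\GOTstop}] \ge \threshold$ minus expected waste. The cleanest statement is $\E[\GOTstop] \le \hat\maxDeadline / \statistic\distribution$, so that the expected warehouse cost is $\COST\,\E[\GOTstop] \le \COST\,\hat\maxDeadline / \statistic\distribution$.

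Then I would bound, for each retailer $\rho$, the expected number of local cutoffs $\E[\LOTstop]$. Here the key observation is that the local cutoffs partition the $\AREAFN$-axis interval $[0, \AREA{\hat\maxDeadline}]$ into pieces, and consecutive local cutoffs have $\AREAFN$-gap at most $1$ by construction; conversely, the greedy choice makes the gaps "almost $1$" in expectation, and this is exactly where the waste statistic enters — the expected $\AREAFN$-distance "wasted" at the end (or between consecutive local cutoffs, relative to a fresh threshold) is controlled by $\EWaste\distribution\threshold$. Because $\AREAFN$ has derivative in $[0,1]$ (from \eqref{eqn: density bound}), the induced process on the $\AREAFN$-axis is a "slowed-down" version of the global renewal process, so the same tally-game bound applies and yields $\E[\LOTstop] \le \big(\sum_{t=1}^\maxDeadline x^\rho_t\big)/\statistic\distribution$. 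Summing over $\rho$, the expected total retailer cost is at most $\big(\sum_\rho \cost \sum_t x^\rho_t\big)/\statistic\distribution$. Adding the warehouse bound gives expected cost at most $\costfn{\mbfx}/\statistic\distribution$.

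The main obstacle I anticipate is the retailer bound: one must carefully relate the greedy local-cutoff selection — which picks, among the already-determined global cutoffs, the furthest one keeping the $\AREAFN$-gap $\le 1$ — to a clean renewal/tally-game process, accounting for the fact that the local cutoffs are constrained to lie on the global grid and that the composition with the sublinear map $\AREAFN$ could in principle distort the sample distribution. The resolution is that conditioning on the global cutoffs and then applying the tally game "in $\AREAFN$-space" still gives the bound, because skipping global cutoffs only helps (each skipped-over increment is itself an independent $\distribution$-sample, so the gap up to the chosen local cutoff is a sum of i.i.d.\ $\distribution$-samples stopped by the threshold $1$, exactly the tally game with $\threshold$ near $1$); the boundary term at $\LOTstop$ is where $\sup_\threshold \EWaste\distribution\threshold$ rather than a single $\threshold$ is needed. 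Making this coupling precise, and handling the edge effects at the first and last cutoffs, is the delicate part of the argument.
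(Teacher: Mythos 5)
Your decomposition---feasibility from constraint~\eqref{eqn: satisfy bound}, Wald's equation for the warehouse term, and a per-retailer tally-game argument---is the paper's approach, but the retailer step as you sketch it has a genuine gap. You propose to play the tally game ``in $\AREAFN$-space'' with threshold (near) $1$, on the grounds that the skipped-over increments are i.i.d.\ samples from $\distribution$. The increments that accumulate toward the stopping rule in $\AREAFN$-space are not $\distribution$-samples: they are the images $\AREA{\GOTt{i+1}}-\AREA{\GOTt{i}}$ of the samples under a map whose derivative lies in $[0,1]$, so $\EWaste \distribution \threshold$ says nothing directly about that process. The paper's proof instead plays the game in the original coordinate: from the current state it fixes $\ell'$ by $\AREA{\ell'}-\AREA{\LOTt j}=1$ and runs the tally game with threshold $\threshold=\ell'-g$, where $g$ is the global cutoff following $\LOTt j$. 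That threshold is in general nowhere near $1$---it is unbounded when $\AREAFN$ is flat---which is exactly why $\statistic\distribution$ is defined with $\sup_{\threshold\ge 0}$, and why the supremum is needed at every step, not only (as you say) at the boundary step $\LOTstop$. The waste is then transferred to $\AREAFN$-space via the Lipschitz bound $\AREA{\ell'}-\AREA{\LOTt{j+1}}\le \ell'-\LOTt{j+1}$, giving expected $\AREAFN$-increase at least $\statistic\distribution$ per step, after which Wald's equation (with $\LOTstop$ a stopping time) yields $\Exp[\LOTstop]\le\AREA{\hat\maxDeadline}/\statistic\distribution$. Your ``slowed-down process'' remark gestures at this transfer, but the coupling you actually state is to the wrong game.

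Moreover, the delicate boundary case is not the one you identify, and your proposed remedy does not address it. The difficulty is that the global sequence stops as soon as $\GOTt i\ge\hat\maxDeadline-1$, so at the last local step there may be no global cutoff beyond $\ell'$ at all, and that play of the game is truncated; taking a supremum over thresholds does not repair this. The paper fixes it by a monotone coupling: analyze a modified algorithm whose global sequence is extended until it reaches $\hat\maxDeadline$ (since $\ell'<\hat\maxDeadline$, a cutoff beyond $\ell'$ then always exists), and observe that extending $\GOT$ cannot decrease the number of local cutoffs selected, so the bound proved for the modified algorithm dominates the true local order cost. Without this (or an equivalent) argument, your bound on $\Exp[\LOTstop]$ is not justified at the final step.
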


The main ideas underlying $\round_\distribution$ and its analysis are from \cite{jrp-deadlines-nonner};
the presentation here (\lref[Section]{sec: details}) addresses some technical subtleties.
Subsequent sections (with a minor exception) use \lref[Lemma]{lemma: round} as a black box;
they can be read independently of the proof in \lref[Section]{sec: details}.


\subsection{The details of \texorpdfstring{$\round_\distribution$}{Round(p)} and 
proof of \texorpdfstring{Lemma~\ref{lemma: round}}{Lemma 1}}\label{sec: details}

Fix an arbitrary optimal fractional solution $\mbfx$ of the LP relaxation for instance $\calI$.
As previously discussed, without loss of generality we can assume that
the given universe $\universe$ of release times and deadlines is $[\maxDeadline]$,
where $\maxDeadline$ ($\le 2n$) is the maximum deadline.
We focus on producing a ``rounded'' schedule $S$ for $\calI$
with expected cost at most $1/\statistic p$ times
$\costfn \mbfx = \sum_{j=1}^\maxDeadline \big(\COST\, x_j + \sum_\rho \costof{\rho}\, x_j^\rho\big)$.

\newcommand{\rmo}{r\mbox{--}1}

\myparagraph{Extend to continuous time.}
To start, we recast the problem of rounding $\mbfx$ as a continuous-time problem.
Extend the universe $\universe$ of times
from the discrete set $\universe = [\maxDeadline]$
to the continuous interval $\overline\universe = (0,\maxDeadline]$
and relax each demand period $[r,d]$, 
replacing it with the half-open period $(\rmo,d]$.
Let $\overline\calI$ denote the modified instance.
To find a schedule $S$ for the given instance $\calI$,
we will find a~schedule $\overline S$ for $\overline\calI$,
then take $S = \{ (\lceil t \rceil, R)\suchthat (t,R) \in \overline S\}$.
$S$ clearly has the cost not larger than that of $\overline S$
and is also feasible (because the release times and deadline are integers).

For the algorithm, reinterpret the fractional solution $\mbfx$ 
as a {\em continuous-time} solution $\overline\mbfx$ over universe $\overline\universe$:
as $t\in\overline\universe$ ranges continuously from $0$ to $\maxDeadline$,
the continuous-time solution 
$\overline\mbfx$ ships continuously at the {\em shipping rate} $\overline x_t = x_{\lceil t\rceil}$
and has each retailer $\rho$ join at his {\em take rate} 
$\overline x^\rho_t = x^\rho_{\lceil t\rceil}$.
The example at the top of \lref[Figure]{fig:bar_chart} illustrates $\overline \mbfx$ over time.


\begin{figure}[t]
  \centering
    \includegraphics[height=1in]{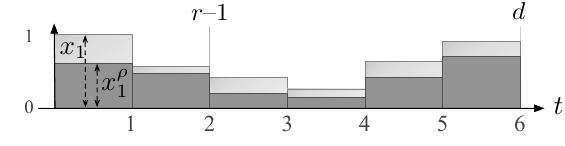}
    \vspace*{-10pt}
    \caption{
      The continuous solution $\overline\mbfx$:
      the universe is $\overline\universe=(0,\maxDeadline]$.
      At each time $t$ in $\overline\universe$,
      the solution ships at rate $\overline x_t= x_{\lceil t\rceil}$
      while each retailer $\rho$ takes at rate $\overline x^\rho_t=x^\rho_{\lceil t \rceil}$.
      During any demand period $(r-1,d]$, the retailer's cumulative take
      is at least 1.
    }
    \label{fig:bar_chart}
\end{figure}


At each time $t$ in $\overline\universe$,
define the {\em total shipped} up to time $t$ to be
$\MEAS{t} = \int_{0}^ t \overline x_t \,dt = \int_0^t x_{\lceil t \rceil}\,dt$.
Define $\rho$'s {\em take} up to time $t$ to be
$\AREA{t} = \int_{0}^ t \overline x^\rho_t \,dt = \int_0^t x^\rho_{\lceil t \rceil}\,dt$.
Over any interval $(t,t']$, define the amount shipped to be $\MEAS{t'}-\MEAS{t}$.
Likewise define $\rho$'s take to be $\AREA{t'}-\AREA{t}$.


\myparagraph{The algorithm.}
$\round_\distribution$ draws samples $(\sample 1, \sample 2, \ldots, \sample \GOTstop)$
i.i.d.~from the distribution $p$, 
stopping when the sum of the samples first exceeds $\MEAS{\maxDeadline}-1$.
It creates orders at the times $\mathbf t = (t_1,t_2,\ldots,t_{\GOTstop})$
such that, for each $i\in[\GOTstop]$,
the continuous-time solution $\overline\mbfx$ 
ships $s_i$ units in interval $(t_{i-1},t_i]$ (interpreting $t_0$ as 0).
By the choice of the number of samples ${\GOTstop}$,
$\overline\mbfx$ ships strictly less than 1 unit in $(t_{\GOTstop},\maxDeadline]$.

After $\round_\distribution$ chooses $\mathbf t$, 
for each retailer $\rho$ independently,
it has $\rho$ join a minimum-size subset of the orders that satisfies $\rho$'s demands.
(It computes this optimal subset using the standard earliest-deadline-first algorithm.)
This determines the schedule $\overline S$ for the continuous-time instance $\overline\calI$.
To get the schedule $S$ for the original instance $\calI$,
the algorithm shifts each order time $t_i$ to its ceiling $\lceil t_i\rceil$.
The algorithm is shown in \lref[Figure]{fig:round}.

We remark that, in practice, modifying the algorithm to round up 
the times as soon as they are chosen might yield lower-cost solutions for some instances.  
That is, take each $t_i$ to be the minimum integer such that the amount shipped over 
interval $(t_{i-1}, t_i]$ is at least $\sample i$ (interpreting $t_0$ as $0$).  
Stop when the amount shipped over $(t_i, U]$ is less than 1.


\newcommand{\StateLong}[1]
{\State\parbox[t]{\dimexpr\linewidth-\algorithmicindent}{#1 \strut}}

\begin{figure}[t]
\centering
\begin{minipage}{\textwidth}
\hrulefill
\begin{algorithmic}[1] \setlength{\parindent}{0in}\setlength{\parskip}{1pt}\setlength{\itemsep}{1pt}

    \StateLong{
      Draw samples $(\sample 1,\sample 2, \sample 3, \ldots, \sample \GOTstop)$
      i.i.d.~from $\distribution$, stopping when the sum of the samples
      first exceeds $\MEAS{\maxDeadline}-1$.
      Schedule orders at times in $\mathbf t = (t_1,t_2,\ldots,t_{\GOTstop})$
      such that, for each $i\in[\GOTstop]$,
      each $t_i$ is the minimum such that
      $\overline\mbfx$ ships $\sample i$ units
      in the interval $(t_{i-1},t_i]$
      (interpreting $t_0$ as~0).
      (By the choice of $\GOTstop$,
      $\overline\mbfx$ ships strictly less than 1 unit
      in interval $(t_{\GOTstop},\maxDeadline]$.)} \label{line:t}

    \For{{\bf each} $\rho\in[m]$} 

    \StateLong{
      Use the earliest-deadline-first algorithm to choose a minimum-size subset
      of the orders for~$\rho$ to join to satisfy his demands.
      More explicitly:
    }\label{line:join}

    \While{retailer $\rho$ has any not-yet-satisfied demand $(\rho,r,d)$}
    \label{line:while}

    \State Let $d^*$ be the earliest deadline of such a demand.

    \State
    Have $\rho$ join the order at time $T = \max \{ t_i \in \mathbf t \suchthat t_i \le d^*\}$.
    \label{line:T}

    \EndWhile \label{line:end}

    \EndFor

    \State   Let $\overline S$ denote the resulting schedule for $\overline \calI$.  
    Return $S = \big\{ (\lceil t \rceil, R)\suchthat (t,R)\in \overline S\big\}$.

\end{algorithmic}
\hrulefill
\hrulefill
\caption{ $\round_\distribution(\COST, \costVector, \demands, \mbfx)$
  randomly rounds the continuous-time fractional solution $\overline\mbfx$.
}\label{fig:round}
\end{minipage}
\end{figure}


\begin{proof}(of \lref[Lemma]{lemma: round})
  {\em Correctness and feasibility.}
  We claim first that the number $\GOTstop$ of order times 
  in line~\ref{line:t} has finite expectation; in other words,
  	with probability 1, $\GOTstop$ is finite --- line~\ref{line:t} finishes.
This follows from standard calculation:
    $\GOTstop$
    is the number of samples taken before the sum of
    the samples exceeds $\MEAS{\maxDeadline}-1$.
    Since $\E[\distribution] \ge \statistic p > 0$,
    there exists $\epsilon>0$ such that $\Pr[s \ge \epsilon] \ge \epsilon$.
    Thus, the expected number of samples needed 
    to increase the sum by $\epsilon$ is at most $1/\epsilon$.
    By linearity of expectation,
    the expected number of samples needed to increase 
    the sum by $\MEAS{\maxDeadline}$ is at most
    $\MEAS{\maxDeadline}/\epsilon^2$.
    Thus, $\E[\GOTstop] \le \MEAS{\maxDeadline}/\epsilon^2<\infty$.

  In the argument below (for cost estimation)
	we show  that each iteration of the inner loop on line~\ref{line:while}
  satisfies some not-yet-satisfied demand $(\rho,r,d)$.
  Thus, with probability 1, $\round_\distribution$ terminates.
  By inspection, it does not terminate until all demands are satisfied.
  Thus, with probability 1, $\round_\distribution$ 
  returns a feasible solution.


\emmedparagraph{Cost of the schedule.} 
We use the following basic properties of $\overline\mbfx$.
\begin{enumerate}
  \item Over any interval $(t,t'] \subseteq \overline\universe$,
    each retailer $\rho$'s take is at most the amount shipped.
    This follows directly from LP constraint~\eqref{eqn: density bound}.
  \item For each demand $(\rho, r, d)$,
    retailer $\rho$'s take over the demand period $(r-1,d]$ is at least 1.
	This holds because the take equals $\sum_{t=r}^d x_t$, 
    which is at least 1 by LP constraint~\eqref{eqn:  satisfy bound}.
    Consequently, the amount shipped over $(r-1,d]$ is also at least 1.
\end{enumerate}

  The given fractional solution $\mbfx$ 
  has warehouse cost $\COST\, \MEAS{\maxDeadline}$
  and retailer cost $\sum_{\rho} \costof{\rho}\, \AREA{\maxDeadline}$.
  (Recall that $\MEAS{\maxDeadline}$ is the amount $\overline\mbfx$ ships up to time $\maxDeadline$ 
  while $\AREA{\maxDeadline}$ is $\rho$'s take up to time $\maxDeadline$.)

  The algorithm's schedule $S$ has warehouse cost $\COST\, \GOTstop$
  and retailer cost $\sum_{\rho} \costof{\rho}\, \LOTstop$,
  where $\GOTstop$ is the number of orders placed in line~\ref{line:t}
  and $\LOTstop$ is the number of those orders joined by $\rho$
  in lines~\ref{line:join}--\ref{line:end}.

  We show $\E[\GOTstop] \le \MEAS{\maxDeadline}/\statistic p$
  and $\E[\LOTstop] \le \AREA{\maxDeadline}/\statistic p$.
  By linearity of expectation, these bounds imply that
  the schedule's expected cost is a most $1/\statistic p$ times the cost of $\mbfx$,
  proving the lemma.

  \smallskip

  First analyze $\E[\GOTstop]$, the expected number of samples until the sum
  of the samples exceeds \mbox{$\MEAS{\maxDeadline}-1$}. Clearly $\GOTstop$ is a
  stopping time.\footnote {That is, for any $i\in\nat$, the event $\GOTstop =
  i$ is determined by the first $i$ samples.} As noted previously, $\GOTstop$
  has finite expectation. So, by Wald's Lemma (see \lref[Appendix]{sec: walds
  lemma}), the expectation of the sum of the first $\GOTstop$ samples
  is not smaller than the expectation
  of $I$ times the expectation of each sample: $\E[\sum_{i=1}^\GOTstop \sample
  i] \geq \E[\GOTstop]\, \E[p]$. The sum is at most $\MEAS{\maxDeadline}$,
  because the sum of the first $\GOTstop-1$ samples is less than
  $\MEAS{\maxDeadline}-1$ and the last sample is at most 1. Thus,
  $\MEAS{\maxDeadline} \ge \E[\GOTstop]\, \E[p]$. Rearranging, $\E[\GOTstop]
  \le \MEAS{\maxDeadline}/\E[p] \le \MEAS{\maxDeadline}/\statistic p$, as
  desired.

  \smallskip

  Next analyze $\E[\LOTstop]$.
  Fix a retailer $\rho\in[m]$.
  Focus on the inner loop, lines~\ref{line:join}--\ref{line:end}.
  For each iteration $j\in[\LOTstop]$,  let $d^*_j$
  and $T_j$ denote, respectively, the value of $d^*$ and $T$
  in iteration $j$.
  The order that $\rho$ joins at time $T_j$
  indeed satisfies that iteration's unmet demand $(\rho, r, d^*_j)$,
  because $\rho$'s take over the period $(r-1,d^*_j]$ is at least 1,
  so the amount shipped over the period is at least 1,
  so, by the choice of $\mathbf t$ in line~\ref{line:t},
  the period has to contain some order time $t_i$,
  and $T_j \in [t_i, d^*_j]$.
  Then, by a standard induction on $j$, after $\rho$ joins the order at time $T_j$,
  all of $\rho$'s demands whose demand periods overlap $(0,T_j]$ are satisfied.
  Hence $\rho$'s order times are strictly increasing:
  $T_1 < T_2 < \cdots < T_{\LOTstop}$.

  Consider any non-final iteration $j$ of the loop.
  Define $\psi_j = (\sample 1, \sample 2, \ldots, \sample {k})$,
  the {\em state at the end of iteration $j$},
  to be the first $k$ samples drawn from $\distribution$,
  where $k=k_j$ is the number of samples
  needed to determine $T_j$ in iteration $j$.
  Explicitly, $k$ is determined by the condition

  \begin{equation}\label{eqn:kjt}
    t_{k-1}\,\le\, d^*_j\, <\, t_{k},
  \end{equation}

\noindent
  which implies $T_j = t_{k-1}$.
  (The sample $\sample {k}$ is included in $\state j$ because,
  for $T_j$ to be the maximum order time less than or equal to $d^*_j$,
  the order time {\em following} $T_j$ must exceed $d^*_j$.)
  When we look at the related warehouse shipments, 
  $t_{k-1} \leq d^*_j < t_k$ implies $\MEAS{t_{k-1}} \leq\, \MEAS{d^*_j} \, 
  \leq\, \MEAS{t_k}$. 
  The last inequality is in fact strict, 
  because the algorithm chooses $t_k$ minimally.
  Since $\overline\mbfx$ ships each $\sample i$ over the interval $(t_{i-1},t_i]$,
  the last relation implies that

  \begin{equation}\label{eqn:kj}
  \sum_{i=1}^{k-1} \sample i\,\le\, \MEAS{d^*_j}\, <\, \sum_{i=1}^{k} \sample i.
  \end{equation}
	
  Define $\rho$'s {\em take during iteration $j$},
  denoted $X_j$, to be $\rho$'s take over the interval $(T_{j-1},T_j]$
  (interpreting $T_0$ as 0).
  To finish the proof,  consider the sum $\sum_{j=1}^{\LOTstop-1} X_j$,
  that is, $\rho$'s take up to the start of the last iteration.

  The sum's upper index~$\LOTstop - 1$ is a stopping time.
  (Indeed, $\state j$ determines 
  which of $\rho$'s demands remain unsatisfied at the start of iteration $j+1$.
  Iteration $j+1$ will be the final iteration $\LOTstop$
  iff those unsatisfied demands can be satisfied by a single order.
  Thus, $\state j$ determines whether $\LOTstop - 1 = j$.)
  Clearly $\LOTstop-1$ has finite expectation.
  (Indeed, $\LOTstop$ is at most the number of demands.)

  We claim that expectation of each term $X_j$ in the sum,
  given the state at the start of iteration~$j$, is at least $\statistic p$:

  \begin{equation}\label{eqn:claim}
    \E[X_{j}~|~\state {j-1}]~\ge~ \statistic p.
  \end{equation}

  Before we prove Claim~\eqref{eqn:claim},
  observe that it implies the desired bound on $\LOTstop$, as follows.
  The upper index $\LOTstop-1$ of the sum $\sum_{j=1}^{\LOTstop-1} X_j$
  is a stopping time with finite expectation,
  and the conditional expectation of each term is at least $\statistic p$,
  so, by Wald's Lemma (see \lref[Appendix]{sec: walds lemma}),
  the expectation of the sum is at least $\E[\LOTstop-1] \statistic p$.
  On the other hand,  the value of the sum never exceeds $\AREA{\maxDeadline}-1$.
  (Indeed,   at the start of the last iteration $j=\LOTstop$, 
  some demand $(\rho, r, d)$ remains unsatisfied,
  and that demand, which has total take at least 1,
  does not overlap $(0,T_{j-1}]$,
  so $\rho$'s take up to time $T_{j-1}$ can be at most 1 less than the total take.)
  Thus, $\E[\LOTstop -1]\statistic p \le \AREA{\maxDeadline}-1$.
  Since $\statistic p \le 1$, this implies the desired bound $\E[\LOTstop] \le \AREA{\maxDeadline}/\statistic p$.

  To finish, we prove Claim~\eqref{eqn:claim}.
  Fix any state $\state {j-1}$ and  let $k=k_{j-1}=|\state {j-1}|$.
  Consider iteration $j$.
  Note that $\state {j-1}$ determines both $T_{j-1}$ and $d^*_j$.
  Call the samples in $\state{j}$ but not in $\state {j-1}$ {\em newly exposed}.
  Crucially, $\state {j-1}$ does not condition the newly exposed samples.
  Let random variable~$h = k_j-k_{j-1}$ be the number of newly exposed samples.
  By Condition~\eqref{eqn:kj}, $h$ is the index such that
  \(
  \sum_{i=1}^{k+h-1} \sample i
  \,\le\, \MEAS{d^*_j}
  \,<\, 
  \sum_{i=1}^{k+h} \sample i.
  \)

  Define $z = \MEAS{d^*_j} - \sum_{i=1}^{k} \sample i$. Then $z\ge 0$ with probability~$1$ and $z$
  is determined by $\state {j-1}$.
  Using $s'_1, s'_2, \ldots, s'_h$ to denote the newly exposed samples (i.e., $s'_\ell = s_{k+\ell}$),
  the condition on $h$ above is equivalent to

  \[
  s'_1 + s'_2 + \cdots + s'_{h-1} \,\le\, z \,<\,   s'_1 + s'_2 + \cdots + s'_{h-1} + s'_h.
  \]

\noindent
  That is, the iteration exposes new samples just until their sum exceeds $z$.
  Upon consideration, this process corresponds to a play of the tally game with threshold $z$,
  in the definition of the statistic $\statistic p$.
  (See \lref[Figure]{fig:tally}.)
  \begin{figure}[t]
    \centering
    \includegraphics[height=.93in]{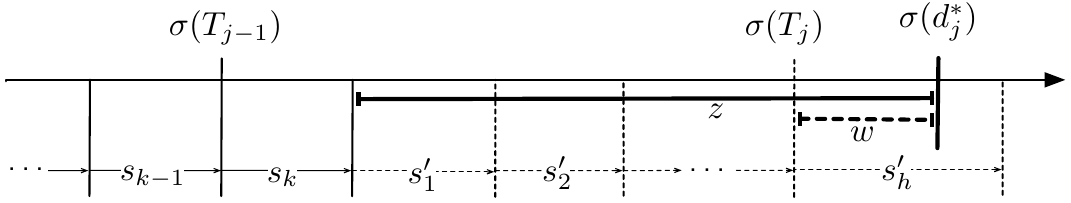}
    \caption{The increase in $\rho$'s take in each iteration corresponds to a play of the tally game.}
    \label{fig:tally}
  \end{figure}
 Recall from the definition of $\statistic p$
  that the {\em waste} is $w = z-(s'_1+s'_2+\cdots+s'_{h-1})$.

  By inspection, using that $\sum_{i=1}^{k+h-1} \sample i = \MEAS{T_j}$,
  the waste $w$ in this setting equals $\MEAS{d^*_j} - \MEAS{T_j}$,
  so $\rho$'s take $X_{j}$ during the iteration,  that is, $\AREA{T_j} - \AREA{T_{j-1}}$, equals

  \begin{align*}
  [\AREA{d^*_j} - \AREA{T_{j-1}}]
  -
  [\AREA{d^*_j} - \AREA{T_j}]
  ~&\ge~
  1
  -
  [\AREA{d^*_j} - \AREA{T_j}]
		\\
  ~&\ge~
  1 - [\MEAS{d^*_j} - \MEAS{T_j}]
  ~=~ 
  1-w.
  \end{align*}

  The first inequality holds because,
  as observed previously, $\rho$'s take over interval $(T_{j-1},d^*_j]$ is at least 1.
  The next inequality holds because
  $\rho$'s take over $(T_j, d^*_j]$ is at most the amount shipped.

  Recall that, by definition of $\statistic p$,
  the expectation of $(1-w)$ is at least $\statistic \distribution$.
  Thus, the inequality above implies Claim~\eqref{eqn:claim} --- that
  the conditional expectation of each $X_{j}$ is at least $\statistic p$.
\end{proof}

The next utility lemma
says that, in analyzing the expected waste in the tally game,
it is enough to consider thresholds $\threshold$ in $[0,1)$.

\begin{lemma}\label{lemma: 01}
  For any distribution $p$ on $[0,1]$,
  $\sup_{\threshold\ge 0} \EWaste \distribution \threshold  ~=~\sup_{\threshold\in[0,1)} \EWaste \distribution \threshold $.
\end{lemma}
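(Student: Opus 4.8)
I would prove this by showing that the waste function $\EWaste{\distribution}{\threshold}$ is "periodic modulo the behavior at integer shifts" in a way that lets us reduce any $\threshold \ge 1$ to a smaller one. The key observation is a renewal-type decomposition: run the tally game with threshold $\threshold \ge 1$. Look at the partial sums $\sample 1, \sample 1+\sample 2, \ldots$ Since each sample lies in $[0,1]$, the partial sums cannot jump over the level $\threshold - 1$ without landing in the interval $[\threshold-1, \threshold)$; more precisely, there is a well-defined first index $k$ at which the partial sum $\sigma_k = \sample 1 + \cdots + \sample k$ first reaches or exceeds $\threshold - 1$. At that moment $\sigma_k \in [\threshold-1, \threshold)$ (it cannot already exceed $\threshold$ since the previous partial sum was below $\threshold-1$ and the increment $\sample k \le 1$).

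\textbf{Key steps.} First I would make precise the claim that conditioned on the value $\sigma_k = \threshold - 1 + u$ for some $u \in [0,1)$, the remaining game — continuing to draw samples until the running sum exceeds $\threshold$ — is exactly an independent copy of the tally game with threshold $u' := \threshold - \sigma_k = 1 - u \in (0,1]$, because the samples $\sample{k+1}, \sample{k+2},\ldots$ are fresh i.i.d.\ draws from $\distribution$ and the waste in the original game equals the waste in this sub-game. Wait — I need to be careful about an edge case: it is possible that $\sigma_k$ itself already exceeds $\threshold$? No: as noted, $\sigma_{k-1} < \threshold - 1$ and $\sample k \le 1$ give $\sigma_k < \threshold$. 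Also $k \ge 1$ always since $\threshold - 1 \ge 0$ and we need at least one sample (if $\threshold - 1 = 0$ then $k$ could be $0$, i.e.\ the empty sum already "reaches" $0$; I would handle $\threshold = 1$ as a limiting/degenerate case, or simply note the $\sup$ over $[0,1)$ already captures it by continuity, or absorb it by a convention that $k \ge 1$). Second, I would conclude $\EWaste{\distribution}{\threshold} = \Exp_{u}\big[\EWaste{\distribution}{1-u}\big]$ where $u \in [0,1)$ is the overshoot of the level $\threshold-1$, so $\EWaste{\distribution}{\threshold} \le \sup_{\threshold' \in (0,1]}\EWaste{\distribution}{\threshold'}$. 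Third, since $\EWaste{\distribution}{1} $ is itself (by the same reasoning, or by right-continuity of the waste function, or by the fact that threshold $1$ with samples in $[0,1]$ has waste equal to that of some threshold in $[0,1)$ via one more renewal step) bounded by $\sup_{\threshold' \in [0,1)}\EWaste{\distribution}{\threshold'}$, we get $\sup_{\threshold\ge 0}\EWaste{\distribution}{\threshold} \le \sup_{\threshold\in[0,1)}\EWaste{\distribution}{\threshold}$; the reverse inequality is trivial.

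\textbf{Main obstacle.} The substantive work is the measure-theoretic bookkeeping around the renewal/stopping argument — verifying that the post-$\sigma_k$ process is genuinely an independent copy of the tally game (this is essentially the strong Markov property for the random walk $\sigma_0, \sigma_1, \ldots$ at the stopping time $k$), and handling the boundary values $\threshold = 1$ and the possibility $u = 0$ cleanly so that the supremum over the half-open interval $[0,1)$ really does dominate. I expect the cleanest writeup to condition on the first index $k$ with $\sigma_k \ge \threshold - 1$, assert $\sigma_k - (\threshold-1) \in [0,1)$, and then invoke the i.i.d.\ structure to say the conditional distribution of the original waste equals the (unconditional) waste of the game at threshold $1 - (\sigma_k - (\threshold-1))$, which lies in $(0,1]$; then average and compare to the supremum.

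\textbf{Alternative framing.} If the renewal argument feels heavy, an equivalent and perhaps shorter route is to directly exhibit, for each $\threshold \ge 1$, a coupling: sample $\sample 1, \sample 2, \ldots$, let $k$ be minimal with $\sigma_k \ge \threshold - 1$, and observe that the pair (original game at $\threshold$, its waste) has the same law as (game at $\threshold - \sigma_k$ driven by $\sample{k+1}, \ldots$, its waste), so $\EWaste{\distribution}{\threshold}$ is an average of values $\EWaste{\distribution}{\threshold''}$ with $\threshold'' \in (0, 1]$; iterate once more on any residual $\threshold'' = 1$ to push it into $[0,1)$. I would present whichever of these is shorter after drafting, but the content is identical.
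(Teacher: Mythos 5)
Your renewal argument is essentially the paper's own proof: the paper stops at the first prefix sum whose slack $\threshold-(\sample 1+\cdots+\sample h)$ drops strictly below $1$, so the residual threshold lands directly in $[0,1)$ and the boundary case $\threshold''=1$ that you patch separately never arises. Apart from that cosmetic difference (and note that $\EWaste \distribution \cdot$ need not be continuous, so your ``one more renewal step'' fix is the right one, not the continuity remark), the decomposition and conclusion are the same and correct.
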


\begin{proof}
  Play the tally game with any threshold $\threshold \geq 1$.  
  Consider the first prefix sum $\sample 1+\sample 2+\cdots+\sample h$
  of the samples, such that the ``slack'' 
  $\threshold - (\sample 1 + \sample 2 + \cdots + \sample h)$ is less than 1.
  Let random variable~$\threshold'$ be this slack.  Note that $\threshold'\in[0,1)$. 
  For any value $u\in[0,1)$,
  the expected waste conditioned on the event ``$\threshold'=u$'' is $\EWaste \distribution u$,
  which is at most $\sup_{y\in[0,1)} \EWaste \distribution y$.
  Thus, for \emph{any} threshold $z\ge 1$, $\EWaste \distribution \threshold $ 
  is at most $\sup_{y\in[0,1)} \EWaste \distribution {y}$. 
\end{proof}


\subsection{Upper bound of \texorpdfstring{$e/(e-1) \approx 1.582$}{e/(e-1) = 1.582}}

Consider the specific probability distribution $\distribution$ on $[0,1]$
with probability density function $\distribution(y) = 1/y$ for $y\in [1/e,1]$
and $\distribution(y) = 0$ elsewhere.

\begin{lemma}\label{lemma: upper bound 1}
  For this distribution $p$, $\statistic \distribution ~\ge~ (e-1)/e ~=~ 1-1/e$.
\end{lemma}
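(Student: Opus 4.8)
The plan is to show that both quantities whose minimum defines $\statistic{\distribution}$ are at least $1-1/e$. The mean is immediate: since $\distribution(y)=1/y$ on $[1/e,1]$ we have $\int_{1/e}^1 \distribution(y)\,\dd y = [\ln y]_{1/e}^1 = 1$ (so $\distribution$ is a genuine density) and $\Exp[\distribution] = \int_{1/e}^1 y\cdot\frac1y\,\dd y = 1-1/e$. Thus the whole statement reduces to proving $\sup_{\threshold\ge 0}\EWaste{\distribution}{\threshold}\le 1/e$, and by \lref[Lemma]{lemma: 01} it suffices to consider $\threshold\in[0,1)$.

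Write $W(\threshold)=\EWaste{\distribution}{\threshold}$. The key step is a recursive integral equation obtained by conditioning on the first sample $\sample 1 = y$: if $y>\threshold$ the game stops in one step with waste $\threshold$; if $y\le\threshold$ then, because the remaining samples are i.i.d.\ and the waste is measured against the residual threshold $\threshold-y$, the conditional expected waste is exactly $W(\threshold-y)$. (Here $\threshold$ ranges over a compact set and the number of samples is almost surely bounded, so all expectations are finite and the manipulation is legitimate.) Hence for $\threshold\in[1/e,1)$,
\[
 W(\threshold) \;=\; \int_{1/e}^{\threshold}\frac{1}{y}\,W(\threshold-y)\,\dd y \;+\; \threshold\int_{\threshold}^{1}\frac{\dd y}{y},
\]
while for $\threshold\in[0,1/e)$ the first integral is empty and $W(\threshold)=\threshold\int_{1/e}^1 \dd y/y = \threshold$.

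I would then evaluate $W$ in three stages, each feeding the next so there is no circularity. Stage 1: on $[0,1/e)$, $W(\threshold)=\threshold<1/e$. Stage 2: for $\threshold\in[1/e,2/e)$ every residual threshold $\threshold-y$ appearing above lies in $[0,1/e)$, so substituting $W(\threshold-y)=\threshold-y$ and integrating (the $\threshold\ln\threshold$ terms cancel) gives $W(\threshold)=1/e$ identically. Stage 3: since $2/e<1$ there remains $\threshold\in[2/e,1)$; now $\threshold-y\in[0,\threshold-1/e]\subseteq[0,1-1/e)\subseteq[0,2/e)$, so we substitute the Stage 1 and Stage 2 values, and the integrals collapse to $W(\threshold)=(1/e-\threshold)\ln(\threshold-1/e)$. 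Putting $u=\threshold-1/e\in[1/e,1-1/e)$ this equals $-u\ln u$, which over $(0,1)$ is maximized at $u=1/e$ with value $1/e$; hence $W(\threshold)\le 1/e$ here as well. Combining the three stages with \lref[Lemma]{lemma: 01} yields $\sup_{\threshold\ge 0}\EWaste{\distribution}{\threshold}=1/e$, so $\statistic{\distribution}=\min(1-1/e,\,1-1/e)=1-1/e$.

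The three integral evaluations are routine. The step requiring the most care is justifying the recursion — precisely the claim that, conditioned on $\sample 1 = y\le\threshold$, the remaining waste has the law of a fresh tally game with threshold $\threshold-y$ — together with organizing the case split so that each stage only uses values of $W$ already established on strictly "smaller" intervals. The sole genuinely analytic ingredient is the elementary inequality $-u\ln u\le 1/e$ on $(0,1)$, with equality at $u=1/e$; this is exactly what makes the bound $1-1/e$ (and, incidentally, shows the bound on $\statistic{\distribution}$ is attained).
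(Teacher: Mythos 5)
Your proof is correct, but it takes a noticeably heavier route than the paper's own proof of this lemma. The paper bounds the expected waste in a single step: writing $s_1$ for the first sample, the waste equals $\threshold$ when $s_1>\threshold$ and is at most $\threshold-s_1$ otherwise (because the waste of the residual game never exceeds its threshold), so for $\threshold\in[1/e,1]$ one gets $\EWaste \distribution \threshold \le \threshold-\int_{1/e}^{\threshold} y\,\distribution(y)\,dy=\threshold-(\threshold-1/e)=1/e$, the case $\threshold\le 1/e$ being trivial; no recursion is set up or solved. You instead derive the full renewal recurrence $\EWaste \distribution \threshold = \threshold\,\Pr[s_1>\threshold]+\int_{1/e}^{\threshold}\EWaste \distribution {\threshold-y}\,\distribution(y)\,dy$ and solve it exactly in three stages, which is precisely the technique the paper reserves for the refined distribution in \lref[Lemma]{lemma: upper bound 2}. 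Your computations check out: $\EWaste \distribution \threshold=\threshold$ on $[0,1/e)$, $\EWaste \distribution \threshold=1/e$ on $[1/e,2/e)$, and $\EWaste \distribution \threshold=-(\threshold-1/e)\ln(\threshold-1/e)\le 1/e$ on $[2/e,1)$ (the residual thresholds there indeed stay below $1-1/e<2/e$, so no circularity), and the justification of the recursion is sound since the samples are i.i.d.\ and each is at least $1/e$, so the game ends after a bounded number of draws. What your longer route buys is the exact waste profile and the fact that $\sup_{\threshold\ge 0}\EWaste \distribution \threshold$ equals $1/e$ exactly (attained on all of $[1/e,2/e)$), hence $\statistic \distribution = 1-1/e$ with equality rather than merely the inequality claimed; what the paper's one-step estimate buys is brevity, at the price of giving only an upper bound on the waste.
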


\begin{proof}
  By \lref[Lemma]{lemma: 01},
  $\statistic \distribution$ is the minimum of $\Exp[\distribution]$ 
  and $1 - \sup_{\threshold \in [0,1)} \EWaste \distribution \threshold $.

  By direct calculation,
  $\Exp[\distribution] = \int_{1/e}^1 y\,\distribution(y)\, dy ~=~ \int_{1/e}^1 1\, dy = 1-1/e$.
  Now
  consider playing the tally game with threshold $\threshold$.
  If $\threshold \in[0, 1/e]$, then (since the waste is at most $\threshold$)
  trivially $\EWaste \distribution \threshold  \le \threshold \le 1/e$.
  So, consider any $\threshold\in [1/e,1]$.  Let $\sample 1$ be just the first sample.
  The waste is $\threshold$ if $\sample 1 > \threshold$ and otherwise is at most $\threshold-\sample 1$.
	Thus,  the expected waste is 

	\begin{align*}
		\EWaste \distribution \threshold \;
		\leq \; & \Pr[\sample 1 > \threshold] \cdot\threshold
		 		~+~ \Pr[\sample 1 \le \threshold]\cdot \Exp[\threshold - \sample 1 ~|~ \sample 1 \le \threshold]  \\
		= \; & \threshold ~-~ \Pr[\sample 1 \le \threshold]\cdot \Exp[\sample 1 ~|~ \sample 1 \le \threshold] \\
		= \; & \threshold - \int_{1/e}^\threshold y\, p(y)\, dy
	  ~=~
	  \threshold - \int_{1/e}^\threshold \, dy
	  ~=~ \threshold - (\threshold-1/e)
	  ~=~ 1/e.
	\end{align*}

Since  both $\Exp[\distribution]$ and
$1-\sup_{\threshold \in [0,1)}\EWaste{\distribution}{\threshold}$ are at least $1-1/e$, the lemma follows.
\end{proof}

From \lref[Lemma]{lemma: upper bound 1}~and~\lref[Lemma]{lemma: round},
the approximation ratio of our
algorithm $\round_\distribution$, with distribution $\distribution$ defined above,
is at most $1/\statistic{\distribution} = \frac{e}{e-1} \approx 1.582$.


\subsection{Upper bound of \texorpdfstring{$1.574$}{1.574}}

On close inspection of the proof of \lref[Lemma]{lemma: upper bound 1}, 
it is not hard to see that the estimate for the waste in that proof is likely not tight. 
The reason is that the proof estimates the waste based on just the first sample, 
while, for the distribution being analyzed,
there is non-zero probability that {\em two} samples
are generated before reaching the threshold, further reducing the waste.
To improve the upper bound,
we adjust the probability distribution (and the analysis) accordingly.

Define a probability distribution $\distribution$ on $[0,1]$, having a point mass at $1$, as follows.
Fix $\theta = 0.36455$ (slightly less than $1/e$).
Over the half-open interval $[0,1)$, the probability density function is

\[
\distribution(y) ~=~
\begin{cases}
  0                  
  & \text{for } y \in [0,\theta) 
  \\
  1/y
  & \text{for } y \in [\theta,2\theta) 
  \\
  \frac{1-\ln((y-\theta)/\theta)}{y}
  & \text{for } y \in [2\theta, 1).
\end{cases}
\]

Define the probability of choosing $1$ to be $1-\int_0^1 \distribution(y) \,dy \approx 0.0821824$.
Note that $\distribution(y)\ge 0$ for $y\in [2\theta,1)$ since $\ln((1-\theta)/\theta)\approx 0.55567$,
so $p$ is indeed a probability distribution on $[0,1]$.


\begin{lemma}\label{lemma: upper bound 2}
  The statistic $\statistic \distribution$ for this $\distribution$ is at least $0.63533 > 1/1.574$.
\end{lemma}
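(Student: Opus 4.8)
The plan is to mirror the structure of the proof of \lref[Lemma]{lemma: upper bound 1}: by \lref[Lemma]{lemma: 01} together with the definition of $\statistic\distribution$, it suffices to show two things separately, namely that $\Exp[\distribution] \ge 0.63533$ and that $\sup_{\threshold\in[0,1)} \EWaste\distribution\threshold \le 1 - 0.63533 = 0.36467$. The expectation bound is a direct computation: $\Exp[\distribution] = \int_\theta^{2\theta} y\cdot\frac1y\,dy + \int_{2\theta}^1 y\cdot\frac{1-\ln((y-\theta)/\theta)}{y}\,dy + 1\cdot q$, where $q\approx 0.0821824$ is the point mass at $1$. Each integrand simplifies (the $y$ cancels), so this reduces to $\int_\theta^{2\theta}1\,dy + \int_{2\theta}^1 \bigl(1-\ln((y-\theta)/\theta)\bigr)\,dy + q$, which after the substitution $u = (y-\theta)/\theta$ becomes an elementary $\int (1-\ln u)\,du = u(2-\ln u)$ evaluation. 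I would just assert the resulting numerical value exceeds $0.63533$.

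The waste bound is the substantive part. For $\threshold\in[0,\theta)$ the waste is trivially at most $\threshold<\theta<0.36467$, so I may assume $\threshold\in[\theta,1)$. As in \lref[Lemma]{lemma: upper bound 1}, I would look at just the first sample $\sample1$ and use $\EWaste\distribution\threshold \le \threshold - \Pr[\sample1\le\threshold]\,\Exp[\sample1\mid \sample1\le\threshold] = \threshold - \int_0^{\min(\threshold,1)} y\,\distribution(y)\,dy$ — but here this one-step bound is \emph{not} tight enough on its own (that is precisely why the density was reshaped beyond $2\theta$), so for $\threshold$ in the upper range I expect to need to recurse: if $\sample1\le\threshold$ and the residual threshold $\threshold-\sample1$ still lies in a region where the one-step bound is weak, expand again. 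Concretely I would define $W(\threshold) = \sup \EWaste\distribution{\cdot}$ over the relevant sub-interval and set up the integral inequality $W(\threshold)\le \threshold\Pr[\sample1>\threshold] + \int (\text{waste after drawing } \sample1 = y)\,\distribution(y)\,dy$, then verify that the claimed constant $0.36467$ satisfies it — the density was reverse-engineered so that the density $1/y$ on $[\theta,2\theta)$ keeps the first-step waste flat at a constant (just as $1/e$ did before), and the corrected density $\frac{1-\ln((y-\theta)/\theta)}{y}$ on $[2\theta,1)$ plus the atom at $1$ is exactly what makes the \emph{second-order} waste term flat as well. I would split into the cases $\threshold\in[\theta,2\theta)$ and $\threshold\in[2\theta,1)$, compute $\int_\theta^\threshold y\,\distribution(y)\,dy$ in closed form in each case (again the $y$ cancels), and check the inequality $\threshold - \int_\theta^\threshold y\,\distribution(y)\,dy \le 0.36467$ — possibly after one extra round of recursion for the top case — reduces to verifying a simple elementary function of $\threshold$ is bounded, which one does by differentiating and checking monotonicity or by plugging in endpoints.

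The main obstacle is pinning down exactly how many levels of the recursive waste expansion are needed and handling the interaction between the atom at $1$ and the recursion: when the residual threshold after a draw is small, a sample equal to $1$ overshoots and the waste is just the residual, whereas a continuous sample may or may not overshoot, and getting these boundary contributions to combine into the clean constant $0.36467$ is where the delicate bookkeeping lives. I would organize this by writing $g(\threshold)$ for the exact one-step-unrolled upper bound and showing $g$ is maximized at a point where its value is below the target; if the single unrolling is insufficient on $[2\theta,1)$ I would unroll once more there, using that after the first sample the residual threshold lands in $[\theta,2\theta)$ or $[0,\theta)$ where the bound is already established. Throughout I would keep the computations symbolic (substitution $u=(y-\theta)/\theta$, antiderivative $u(2-\ln u)$) and only at the very end substitute $\theta=0.36455$ to confirm the numerical inequalities, noting that the specific value of $\theta$ was chosen to balance the expectation bound against the waste bound so that both are simultaneously at least $0.63533$.
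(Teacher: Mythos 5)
Your proposal follows essentially the same route as the paper's proof: a direct computation of $\Exp[\distribution]$, plus the waste bound via the renewal recurrence $\EWaste \distribution \threshold = \threshold\,\mu[\threshold,1] + \int_\theta^\threshold \EWaste \distribution {\threshold-y}\,\distribution(y)\,dy$ with cases split at $\theta$ and $2\theta$, using that the residual threshold after the first sample lands in already-handled lower ranges. The one detail you leave open resolves immediately: since every sample is at least $\theta$ and $\threshold < 1 < 3\theta$, a single unrolling on $[2\theta,1)$ suffices (the residual lies in $[0,2\theta)$), and the atom at $1$ always overshoots any $\threshold<1$ so it enters only through the term $\threshold\,\mu[\threshold,1]$; the recurrence then yields $\EWaste \distribution \threshold = \theta$ exactly for all $\threshold\in[\theta,1)$, as in the paper.
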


\begin{proof}
Recall that $\statistic \distribution = \min \{\Exp[\distribution], 1-\sup_{\threshold\in[0,1)} \EWaste \distribution \threshold\}$.
By calculation, the probability measure~$\mu$ induced by $\distribution$ has
$\mu[1] \approx 0.0821824$ and

\begin{equation*}  
  \mu[0,v) ~=~
  \begin{cases}
    0                 
    & \text{for } v \in [0,\theta) 
    \\
    \ln( v/\theta)
    & \text{for } v \in [\theta,2\theta) 
    \\
    \ln (v/\theta)   ~-~
    		\int_{2\theta}^{v} \frac{\ln((y-\theta)/\theta)}{y} \,dy
    & \text{for } v \in [2\theta, 1).
  \end{cases}
\end{equation*}

\noindent
  The following calculation shows $\Exp[\distribution]> 0.63533$:

  \begin{eqnarray*}
    \Exp[\distribution]&=&\mu[1]+\int_{\theta}^{1}y\distribution(y)dy\\
    &=&\mu[1]+\int_{\theta}^{1}dy-\int_{2\theta}^{1}\ln{\left((y-\theta)/\theta\right)}dy\\
    &=&\mu[1]+(1-\theta)- \left((y-\theta)\ln{\left((y-\theta)/\theta\right)}-y \right) \vertbar_{2\theta}^1\\
    &=&\mu[1]+2-3\theta-  (1-\theta)\ln{\left((1-\theta)/\theta\right)}\\
    &>& 0.0821+2-3\cdot 0.36455-(1-0.36455)\cdot 0.5557\\
    &>&0.63533.
  \end{eqnarray*}

  To finish, we show $\sup_{\threshold\ge 0} \EWaste \distribution \threshold = \theta \le 1 - 0.63533$.
  
  By \lref[Lemma]{lemma: 01}, assume that $\threshold\in [0,1)$.
  In the tally game defining $\EWaste \distribution \threshold$,
  let $s_1$ be the first random sample drawn from $\distribution$.
  If $s_1 > \threshold$, then the waste equals $\threshold$.
  Otherwise, the process continues recursively with $\threshold$ replaced by $z' = \threshold-s_1$.
  This gives the recurrence

\begin{equation*}
\EWaste \distribution \threshold  ~=~ \threshold\,\mu[\threshold,1] 
							+ \int_0^\threshold \EWaste \distribution {\threshold-y} \distribution(y)\,dy
  						~=~ \threshold\,\mu[\threshold,1] 
							+ \int_\theta ^\threshold \EWaste \distribution {\threshold-y}\, \distribution(y)\,dy.
\end{equation*}

\noindent
Break the analysis into three cases, depending on the value of $z$.

\medskip
\noindent
\mycase{1} $\threshold\in [0,\theta)$. In this case $\mu[\threshold,1] = 1$, 
  so $\EWaste \distribution \threshold = \threshold \le \theta$.

\medskip
\noindent
\mycase{2} $\threshold\in [\theta,2\theta)$.
  For $y\in [\theta, \threshold]$, we have $\threshold-y\in[0,\theta]$,
  so, by Case~1, $\EWaste \distribution {\threshold-y} = \threshold-y$.
  Using the recurrence, 

\begin{align*}
	\EWaste \distribution \threshold ~&=~ 
   \threshold \mu[\threshold,1]+\int_{\theta}^\threshold (\threshold-y)\distribution(y)\,dy
\\
   ~&=~ \threshold\left(1-\int_{\theta}^\threshold \distribution(y)dy\right)+\int_{\theta}^\threshold (\threshold-y)\distribution(y)dy
   ~=~ \threshold -\threshold+\theta=\theta.
\end{align*}
 
\medskip
\noindent
\mycase{3} $\threshold\in [2\theta,1)$.
 For $y\in [\theta, \threshold]$, we have $\threshold-y\in [0,2\theta]$,
 so, by Cases~1~and~2 and the recurrence,

\begin{align*}
\EWaste \distribution \threshold
   &~=~ \threshold \mu[\threshold,1]
				+\int_{\theta}^{\threshold-\theta} \theta \distribution(y)dy
				+\int_{\threshold-\theta}^{\threshold} (\threshold-y)\distribution(y)dy
\\
  &~=~ \threshold-\threshold\int_{\theta}^{\threshold-\theta}\distribution(y)dy	
				+\int_{\theta}^{\threshold-\theta} \theta \distribution(y)dy
				-\int_{\threshold-\theta}^{\threshold} y\distribution(y)dy
\\
   &~=~\threshold-(\threshold-\theta)\ln{\frac{\threshold-\theta}{\theta}}
				-\int_{\threshold-\theta}^{\threshold} y\distribution(y)dy
\\
   &~=~ \threshold-(\threshold-\theta)\ln{\frac{\threshold-\theta}{\theta}}
				-\int_{\threshold-\theta}^{\threshold} dy
				+\int_{2\theta}^{\threshold}  \ln{\left((y-\theta)/\theta\right)} dy
\\
   &~=~ (\threshold-\theta)\left(1-\ln{\frac{\threshold-\theta}{\theta}}\right)
			+\int_{2\theta}^{\threshold}  \ln{\left((y-\theta)/\theta\right)} dy
\\
 &~=~(\threshold-\theta)\left(1-\ln{\frac{\threshold-\theta}{\theta}}\right)
				+\left(y-\theta)\cdot (\ln{\left((y-\theta)/\theta\right)}-1\right)\vertbar_{2\theta}^\threshold
\\
   &~=~(\threshold-\theta) 
        \left(
          1-\ln{\frac{\threshold-\theta}{\theta}}
          +\ln{\frac{\threshold-\theta}{\theta}}-1
        \right)
        + \theta 
        =\theta.
\end{align*}

Thus, in all cases, $\EWaste{\distribution}{\threshold}\le\theta$,
completing the proof.
\end{proof}


\begin{theorem}\label{thm: upper bounds}
  $\JRPD$ has a randomized polynomial-time $1.574$-approximation algorithm,
  and the integrality gap of the LP relaxation is at most $1.574$.
\end{theorem}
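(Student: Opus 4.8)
The plan is to combine the two main lemmas already established. Let $\distribution$ be the distribution with the point mass at $1$ defined just above \lref[Lemma]{lemma: upper bound 2}; that lemma gives $\statistic \distribution \ge 0.63533 > 1/1.574$. The algorithm is: given an instance $(\COST,\costVector,\demands)$, solve the LP relaxation to obtain an optimal fractional solution $\mbfx$, then output $\round_\distribution(\COST,\costVector,\demands,\mbfx)$. By \lref[Lemma]{lemma: round} this schedule is feasible and has expected cost at most $\costfn{\mbfx}/\statistic \distribution \le 1.574\,\costfn{\mbfx}$. Since the LP is a relaxation of the integer program defining $\JRPD$, $\costfn{\mbfx}$ is at most the cost of an optimal schedule, so the output has expected cost at most $1.574\,\OPT$, establishing the claimed randomized approximation ratio. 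Reading the same chain of inequalities for an arbitrary instance shows that the minimum cost of a feasible schedule is at most $1.574$ times the optimal LP value, which is exactly the assertion that the integrality gap is at most $1.574$.

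It remains to argue that the algorithm runs in polynomial time. The LP has $O(m\maxDeadline)$ variables and $O(|\demands|+m\maxDeadline)$ constraints of polynomial bit-size (recall $\maxDeadline\le 2n$), so $\mbfx$ can be computed in polynomial time; moreover we may assume $x_t\le 1$ for all $t$ (otherwise cap each $x^\rho_t$, and then each $x_t$, at $1$ --- this preserves feasibility and does not increase cost), so $\hat\maxDeadline=\sum_t x_t\le\maxDeadline\le 2n$. To run $\round_\distribution$ we must sample from $\distribution$: its cumulative distribution function is piecewise elementary and strictly increasing on $[\theta,1)$ with an atom of size $\approx 0.0821824$ at $1$, so a sample can be produced by inversion, to any precision sufficient for the subsequent comparisons, in polynomial time. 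The number of samples drawn in Step~1 is $\GOTstop=\min\{i: \GOTt i\ge\hat\maxDeadline-1\}$, and since every sample is at least $\theta$ we get $\GOTstop\le \hat\maxDeadline/\theta+1=O(n)$; Steps~2 and~3 draw no further randomness (each $\LOT$ is a subsequence of $\GOT$), and computing the breakpoints of the cumulative functions $\AREAFN$, the greedy local cutoff sequences, and the order times $t_i$ is clearly polynomial. Hence the algorithm is randomized polynomial-time.

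There is no genuinely hard step here: the theorem is essentially a corollary of \lref[Lemma]{lemma: round} and \lref[Lemma]{lemma: upper bound 2}, whose proofs contain all the real content. The only points that need to be made carefully are bookkeeping ones --- that the LP optimum lower-bounds the integral optimum (which yields the approximation ratio and the integrality-gap statement simultaneously), that $\distribution$ admits an efficient sampler, and that, because $\distribution$ is supported away from $0$, the stopping time $\GOTstop$ is bounded deterministically rather than only in expectation, so the running-time bound needs no concentration argument.
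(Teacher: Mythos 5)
Your proposal is correct and follows essentially the same route as the paper: invoke Lemma~\ref{lemma: round} with the distribution of Lemma~\ref{lemma: upper bound 2}, note that the LP optimum lower-bounds the integral optimum, and bound the running time by observing that the LP is polynomially solvable, that $\hat\maxDeadline = O(n)$, and that each sample is bounded below by a constant so only $O(n)$ samples are drawn. The only cosmetic difference is that you enforce $x_t\le 1$ by capping, whereas the paper gets it from minimality of the optimal LP solution; both are valid.
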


\begin{proof}
  By \lref[Lemma]{lemma: upper bound 2},
  for any fractional solution $\mbfx$, Algorithm~$\round_p$
  (using the probability distribution $\distribution$ from that lemma)
  returns a feasible schedule of expected cost at most $1.574$ 
  times $\costfn{\mbfx}$.  

  To see that the schedule can be computed in polynomial time,
  note first that the LP relaxation can be solved in polynomial time.
  The optimal solution $\mbfx$ is minimal,
  so each $x_t$ is at most 1, which implies that $\MEAS{\maxDeadline} = \sum_{t} x_t$ 
  is at most the number of demands, $n$.
  In Algorithm~$\round_p$,
  each sample from the distribution $\distribution$ from \lref[Lemma]{lemma: upper bound 2}
  can be drawn in polynomial time.
  Each sample is~$\Omega(1)$, and the sum of the samples 
  is at most $\MEAS{\maxDeadline} \le n$,
  so the number of samples is $O(n)$.
  In the inner loop of the algorithm
  (starting at line~\ref{line:join}),
  for each retailer~$\rho$, the subset of orders joined
  can be computed in time $O(n)$, by amortization, so the total time for
  this step is $O(mn)$, where $m$ is the number of retailers.
\end{proof}



\section{Upper Bound of 1.5 for Equal-Length Periods}
\label{sec: unit period upper bounds}


In this section, we present a 1.5-approximation algorithm for the case where all
the demand periods are of equal length.  In this section release times and deadlines 
are arbitrary rational numbers, and all demand periods have length $1$.

Denote the input instance by $\calI$. Let the \emph{width} of the instance be the
difference between the maximum deadline and the minimum release time.
The building block of our approach is an algorithm that creates an
optimal solution to an instance of width at most $3$.  Later, we divide~$\calI$
into overlapping sub-instances of width $3$, solve each of them optimally, and
show that aggregating their solutions gives a $1.5$-approximation.


\begin{lemma}
\label{lem:3-bounded-optimal}
A solution to any instance $\calJ$ of width at most $3$ consisting of unit-length
demand periods can be computed in polynomial time.
\end{lemma}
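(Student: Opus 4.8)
The plan is to reduce, via two structural observations, to a one‑dimensional covering problem that is then solved by dynamic programming. A naive ``guess all the orders'' approach is hopeless, because an optimal schedule may contain $\Omega(n)$ orders (when $C$ is tiny, each retailer may prefer its own tightly‑placed order). So first I would set up some notation: let $r_0$ be the smallest release time of $\calJ$; since all periods have length $1$ and the width is at most $3$, every release lies in $[r_0,r_0+2]$ and every deadline in $[r_0+1,r_0+3]$. Put $m_1=r_0+1$ and $m_2=r_0+2$. The key elementary fact is that every demand period $[r,r+1]$ contains $m_1$ or $m_2$ (it contains $m_1$ when $r\le m_1$, and $m_2$ when $r\ge m_1$).

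Next I would show there is an optimal schedule all of whose orders lie in the length‑$1$ window $W:=[m_1,m_2]$. Starting from any schedule, relocate every order placed at a time $t<m_1$ to $m_1$, and every order placed at $t>m_2$ to $m_2$, merging orders that end up at the same time. An order at $t<m_1$ can only satisfy demands with period $[r,r+1]$ where $r\le t<m_1$, and each such period also contains $m_1$; the case $t>m_2$ is symmetric. Hence feasibility is preserved and the cost cannot increase.

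Then I would reduce each retailer to two thresholds. For retailer $\rho$, let $P^\rho\in W$ be the smallest deadline among $\rho$'s demands released before $m_1$ (set $P^\rho=m_2$ if there are none), and let $Q^\rho\in W$ be the largest release among $\rho$'s demands released after $m_1$ (set $Q^\rho=m_1$ if there are none). For $t\in W$, a single order at $t$ serves all of $\rho$'s demands iff $t\in[Q^\rho,P^\rho]$, serves all of $\rho$'s ``left'' demands iff $t\le P^\rho$, and all ``right'' demands iff $t\ge Q^\rho$ (periods equal to $W$ are always served). Consequently a set $T\subseteq W$ of order times admits a feasible schedule iff $\min T\le P^\ast:=\min_\rho P^\rho$ and $\max T\ge Q^\ast:=\max_\rho Q^\rho$, and, given such a $T$, retailer $\rho$'s cheapest participation costs $c_\rho$ if $T\cap[Q^\rho,P^\rho]\neq\emptyset$ and $2c_\rho$ otherwise. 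So the task becomes: choose $T\subseteq W$ with $\min T\le P^\ast$ and $\max T\ge Q^\ast$ minimizing $C\,|T|+\sum_\rho c_\rho\,\kappa(\rho,T)$, where $\kappa(\rho,T)\in\{1,2\}$ as above. Since sliding any order leftward to the nearest point of $B:=\{m_1,m_2\}\cup\{P^\rho\}_\rho\cup\{Q^\rho\}_\rho$ changes neither which retailers lie in $T\cap[Q^\rho,P^\rho]$ nor the two endpoint conditions (because $P^\ast,Q^\ast\in B$), we may assume $T\subseteq B$, a set of $O(n)$ candidate times. Up to the additive constant $\sum_\rho 2c_\rho$, the objective equals $C\,|T|-\sum_{\rho:\,T\cap[Q^\rho,P^\rho]\neq\emptyset}c_\rho$, i.e.\ a one‑dimensional ``buy points / collect interval rewards'' problem with two endpoint constraints; a standard left‑to‑right dynamic program over the sorted $B$ (crediting each retailer's reward at the first selected point lying in $[Q^\rho,P^\rho]$, and restricting the smallest resp.\ largest selected point to be $\le P^\ast$ resp.\ $\ge Q^\ast$) computes an optimal $T$ in polynomial time, after which the schedule is read off by giving each retailer a cheapest set of orders to join.

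The main obstacle is the first reduction: recognizing that, although an optimum may use many orders, they can all be pushed into a unit‑length window — this is exactly what linearizes the instance. The remaining work — checking that the pushing is cost‑free, that restricting to $T\subseteq B$ loses nothing, and that the induced one‑dimensional problem is solved \emph{optimally} (not merely approximately), with the endpoint constraints correctly enforced, by the dynamic program — is routine but must be carried out carefully.
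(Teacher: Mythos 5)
Your proof is correct and follows essentially the same route as the paper's: confine all orders to a unit-length window, observe that each retailer then pays $c_\rho$ or $2c_\rho$ according to whether some warehouse order hits the common intersection of its demand periods (your $[Q^\rho,P^\rho]$, the paper's $\demands_\rho$), restrict candidate order times to period endpoints, and optimize by a left-to-right dynamic program. The only cosmetic difference is that you push all orders into $[r_0+1,r_0+2]$ and encode the boundary requirements as the constraints $\min T\le P^\ast$ and $\max T\ge Q^\ast$, whereas the paper fixes the first and last orders at $d_{\min}$ and $r_{\max}$ and treats the case $r_{\max}\le d_{\min}$ separately.
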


\begin{proof}
Shift all demands in time, so that $\calJ$ is entirely contained in
interval~$[0,3]$. Recall that $\COST$ is the warehouse ordering cost and $\costof{\rho}$ is the
ordering cost of retailer $\rho \in[m]$. Without loss of generality,
assume that $m\ge 1$ and each retailer has at least one demand.

Let $d_{\min}$ be the first deadline of a demand from $\calJ$ and $r_{\max}$
the last release time. If $r_{\max} \leq d_{\min}$, then placing one order at
any time from $[r_{\max},d_{\min}]$ is sufficient. Its cost is
then equal to $\COST + \sum_{\rho} \costof{\rho}$, which is clearly equal to
the optimum value in this case.

Now focus on the case $d_{\min} < r_{\max}$. 
Any feasible solution has to place an order at or before $d_{\min}$ and at or
after $r_{\max}$. Furthermore, by shifting these orders, assume that the
first and last orders occur exactly at times $d_{\min}$ and $r_{\max}$, respectively.

\begin{figure}[t]
\begin{center}
\includegraphics[width=5in]{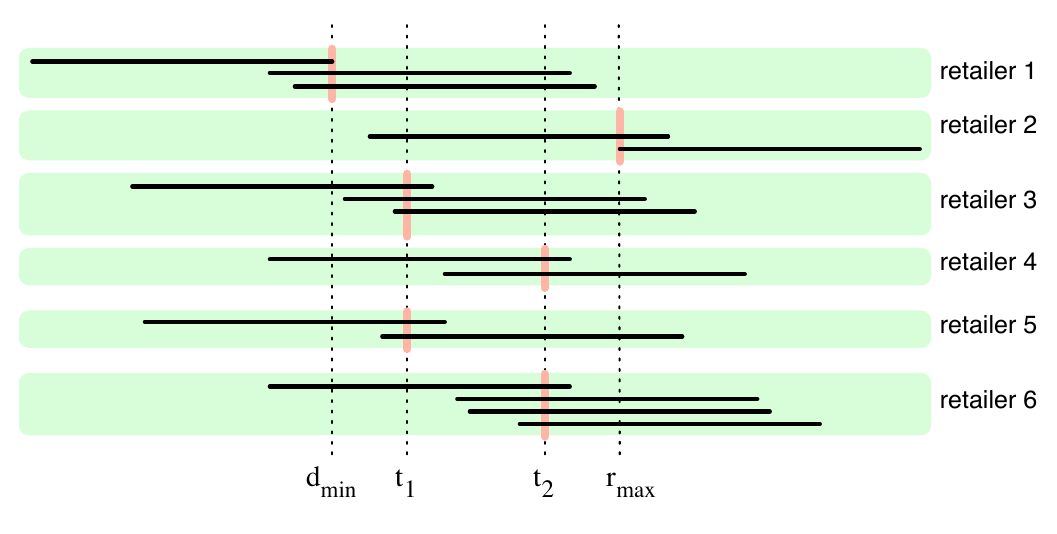}
\caption{An example of an instance and a schedule. Dashed vertical lines represent
warehouse orders, with thick segments indicating which retailers join these
orders. For example, retailer $1$ joins the order at
time $d_{\min}$, retailer $2$ joins the order at time $r_{\max}$,
and retailer $3$ joins the order at time $t_1$.} 
\label{fig: uniform example}
\end{center}
\end{figure}

The problem is thus to choose a set $T$ of warehouse ordering times that
contains $d_{\min}$, $r_{\max}$, and possibly other times from the interval
$(d_{\min},r_{\max})$, and then to decide, for each retailer $\rho$, 
which warehouse orders it joins. Note that $r_{\max}-d_{\min} \leq 1$, and
therefore each demand period contains $d_{\min}$, $r_{\max}$, or
both.  Hence, all demands of a retailer $\rho$ can be satisfied by joining the
warehouse orders at times $d_{\min}$ and $r_{\max}$ at additional cost of $2\costof{\rho}$. 
It is possible to reduce the retailer ordering cost to $\costof{\rho}$ if
(and only if) there is a warehouse order that occurs within $\demands_\rho$, where
$\demands_\rho$ is the intersection of all demand periods of retailer $\rho$.  (To
this end, $\demands_\rho$ has to be non-empty.) 

Hence, the optimal cost for $\calJ$ can be expressed as the sum of four parts: 
\begin{description}
	\item{(i)} the unavoidable ordering cost $\costof{\rho}$ for each retailer $\rho$, 
	\item{(ii)} the additional ordering cost $\costof{\rho}$
		for each retailer $\rho$ with empty $\demands_\rho$,
 	\item{(iii)} the total warehouse ordering cost $\COST \cdot |T|$, and 
	\item{(iv)} the additional ordering cost $\costof{\rho}$ for each retailer $\rho$ whose
		$\demands_\rho$ is non-empty and does not contain any ordering time from $T$.  
\end{description}
As the first two parts of the cost are independent of $T$, focus on minimizing
the sum of parts (iii) and (iv), which we call the {\em adjusted cost}.  
Let $\adjustedcost(t)$ be the minimum
possible adjusted cost associated with the interval $[d_{\min},t]$ under the assumption that there
is an order at time $t$. Formally, $\adjustedcost(t)$ is the minimum, over all choices of
sets $T\subseteq [d_{\min},t]$ that contain $d_{\min}$ and $t$, of
$\COST\cdot |T| + \sum_{\rho\in Q(T)}\costof{\rho}$, where
$Q(T)$ is the set of retailers $\rho$ for which
$\demands_\rho\neq\emptyset$ and $\demands_\rho \subseteq [0,t] - T$.
(Note that the second term consists of expenditures that actually occur outside the 
interval $[d_{\min},t]$.)

As there are no $\demands_\rho$'s strictly to the left of $d_{\min}$, 
$\adjustedcost(d_{\min}) = \COST$.
Furthermore, $\adjustedcost(t)$ for any $t \in (d_{\min},r_{\max}]$ can
be expressed recursively using the value of $\adjustedcost(u)$, where
$u \in [d_{\min},t)$ is the warehouse order time immediately preceding $t$
in the set $T$ that realizes $\adjustedcost(t)$:
\[
	\adjustedcost(t) = \COST + \min_{u \in [d_{\min},t)} \Big( 
		\adjustedcost(u) + \sum_{\rho : \emptyset \neq \demands_\rho \subset (u,t)} \costof{\rho}
	\Big).
\]

The second term inside the minimum represents the cost of retailers
whose sets $\demands_\rho$ do not contain an order.
The minimum in the definition of $\adjustedcost(t)$
is determined by a $u$ that is the deadline of some demand. 
Restricting attention to $t$'s and $u$'s that are deadlines of the demands,
compute the relevant values of function $\adjustedcost(\cdot)$ 
by dynamic programming in polynomial time.  
Finally, the total adjusted cost is~$\adjustedcost(r_{\max})$.
The actual orders can be recovered by a standard extension of the dynamic program.
\end{proof}

We now show how to construct an approximate solution for the original instance $\calI$
consisting of unit-length demand periods. For $i \in \nat$, let $\calI_i$ be the
sub-instance containing all demands entirely contained in $[i,i+3)$.  By
\lref[Lemma]{lem:3-bounded-optimal}, an optimal solution for~$\calI_i$, denoted
$A(\calI_i)$, can be computed in polynomial time.  Let $S_0$ be the solution
created by aggregating $A(\calI_0), A(\calI_2), A(\calI_4), \ldots$
and $S_1$ by aggregating $A(\calI_1), A(\calI_3), A(\calI_5),
\ldots$.  Among solutions $S_0$ and $S_1$, output the one with the smaller cost.


\begin{theorem}\label{thm: equal, 1.5 approximation}
The above algorithm produces a feasible schedule of cost at most $1.5$ times
optimum.
\end{theorem}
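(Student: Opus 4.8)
The plan is to prove the stronger bound $\costfn{S_0}+\costfn{S_1}\le 3\,\OPT$, where $\OPT$ denotes the optimal cost for $\calI$; since the algorithm outputs the cheaper of $S_0$ and $S_1$, this immediately gives $\min\{\costfn{S_0},\costfn{S_1}\}\le\frac32\,\OPT$. So fix an optimal schedule $\sigma^\ast$ of cost $\OPT$, and, for convenience, shift $\calI$ in time so that every release time is at least $3$ (and discard from $\sigma^\ast$ any order satisfying no demand, so that every order of $\sigma^\ast$ occurs at a time $\ge 3$).

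First I would record the structure of the sub-instances. Since every demand period $[r,r+1]$ has length exactly $1$, such a demand is entirely contained in $[i,i+3)$ precisely when $i\le r$ and $r+1<i+3$, i.e.\ when $i$ is an integer in $(r-2,r]$ --- an interval of length $2$, hence containing exactly two integers, one even and one odd. Consequently every demand of $\calI$ lies in exactly one even-indexed sub-instance and exactly one odd-indexed sub-instance, so $S_0=\bigcup_{i\text{ even}}A(\calI_i)$ and $S_1=\bigcup_{i\text{ odd}}A(\calI_i)$ are each feasible for $\calI$. Moreover aggregation merely takes the union of the order sets of the constituent schedules (orders placed at a common time may be merged, which can only decrease cost), so $\costfn{S_0}\le\sum_{i\text{ even}}\costfn{A(\calI_i)}$ and $\costfn{S_1}\le\sum_{i\text{ odd}}\costfn{A(\calI_i)}$, and adding these gives $\costfn{S_0}+\costfn{S_1}\le\sum_i\costfn{A(\calI_i)}$.

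Next I would charge each $A(\calI_i)$ against $\sigma^\ast$. Let $\sigma^\ast_i$ be obtained from $\sigma^\ast$ by keeping only the orders whose time lies in $[i,i+3)$. Each demand of $\calI_i$ has period $[r,r+1]\subseteq[i,i+3)$ and in $\sigma^\ast$ is satisfied by an order at some time $t\in[r,r+1]\subseteq[i,i+3)$; that order survives in $\sigma^\ast_i$ and still satisfies the demand, so $\sigma^\ast_i$ is feasible for $\calI_i$. Since $\calI_i$ has width at most $3$, \lref[Lemma]{lem:3-bounded-optimal} guarantees that $A(\calI_i)$ is an \emph{optimal} solution for $\calI_i$, hence $\costfn{A(\calI_i)}\le\costfn{\sigma^\ast_i}$. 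Summing over $i$ and grouping by order,
\[
\sum_i\costfn{A(\calI_i)}\ \le\ \sum_i\costfn{\sigma^\ast_i}\ =\ \sum_i\ \sum_{(t,R)\in\sigma^\ast\,:\,t\in[i,i+3)}\Bigl(\COST+\sum_{\rho\in R}\cost\Bigr).
\]
For a fixed order $(t,R)$, the condition $t\in[i,i+3)$ holds exactly for the integers $i$ in $(t-3,t]$, and since $t\ge 3$ these are three nonnegative integers; so each order of $\sigma^\ast$ is counted exactly three times and the right-hand side equals $3\,\OPT$. Combining with the previous paragraph gives $\costfn{S_0}+\costfn{S_1}\le 3\,\OPT$, as required.

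The two places that need care --- and, I expect, the only real obstacle --- are precisely the two multiplicities above: that each demand lands in a sub-instance of \emph{each} parity (so the sum splits cleanly into the $S_0$-part and the $S_1$-part, producing the factor $\tfrac12$) and that each optimal order falls into the windows $[i,i+3)$ of exactly three sub-instances (producing the factor $3$). The time shift is there exactly so that both counts are exact near the ends of the horizon, and one should also verify that merging coincident orders during aggregation never raises the cost. Everything else is routine.
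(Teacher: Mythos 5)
Your proof is correct and follows essentially the same route as the paper's: feasibility from the fact that every unit-length demand falls into a sub-instance of each parity, and the cost bound by restricting the optimal schedule to each window $[i,i+3)$, invoking the optimality of $A(\calI_i)$ from \lref[Lemma]{lem:3-bounded-optimal}, and observing that each optimal order is charged exactly three times, giving $\costfn{S_0}+\costfn{S_1}\le 3\cdot\costfn{\OPT}$. Your treatment is just slightly more explicit about the boundary counting and the cost of merging coincident orders during aggregation.
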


\begin{proof}
Each unit-length demand of instance $\calI$ is entirely contained in some
$\calI_{2k}$ for some $k \in \nat$. Hence, it is satisfied in $A(\calI_{2k})$,
and thus also in $S_0$, which yields the feasibility of $S_0$.  An analogous
argument shows the feasibility of $S_1$.

To estimate the approximation ratio, fix an optimal solution $\OPT$ for
instance~$\calI$ and let $opt_i$ be the cost of $\OPT$'s orders in the interval
$[i,i+1)$.  Note that $\OPT$'s orders in $[i,i+3)$ satisfy all
demands contained entirely in $[i,i+3)$. Since $A(\calI_i)$ is an optimal solution
for these demands, we have $\costfn{A(\calI_i)} \leq opt_i + opt_{i+1} +
opt_{i+2}$ and, by taking the sum, $\costfn{S_0} + \costfn{S_1} \leq
\sum_{i} \costfn{A(\calI_i}) \leq 3 \cdot \costfn{\OPT}$. 
Therefore, at least one of the
two solutions ($S_0$ or $S_1)$ has cost not larger than $1.5 \cdot \costfn{\OPT}$.
\end{proof}



\section{Lower Bounds of 1.207 and 1.245}
\label{sec: lower bound 1.207}


In this section we prove the following lower bound on the integrality gap of the LP relaxation from 
\lref[Section]{sec: introduction}:
\begin{theorem}\label{thm: integrality gap 1.207}
The integrality gap of the LP relaxation is
at least $\half (1 + \sqrt 2) \geq 1.207$.
\end{theorem}
We then sketch a computer-assisted proof of a stronger lower bound: $1.245$.

Fix an arbitrarily large integer $\maxDeadline$.
Define universe $\universe = \{i/\maxDeadline \suchthat i\in \nat\}\cap[0,\maxDeadline]$
to contain the non-negative integer multiples of $1/\maxDeadline$ in the interval $[0,\maxDeadline]$.
(The restriction to multiples of $1/\maxDeadline$ is a technicality;
throughout, for intuition, one can consider instead $\universe = [0,\maxDeadline]$.)
Consider an instance with warehouse-order cost $\COST=1$ and two retailers,
where retailer 1 has order cost $\costof{1} = 0$
and retailer 2 has order cost $\costof{2} = \sqrt{2}+\epsilon$,
where $\costof{2}$ is a multiple of $1/\maxDeadline$ and $0 \le \epsilon < 1/\maxDeadline$.
Retailer 1 has a demand for every time interval of length 1;
retailer 2 has a demand for every time interval of length $\costof{2}$:

\begin{equation*}
	\demands ~=~
		\{(1, t, t+1)  \suchthat  \,t, t+1\in \universe\} 
                ~\cup~ \{(2,t,t+\costof{2})  \suchthat  \,t,t+\costof{2}\in \universe\}.
\end{equation*}

Intuitively, in any solution,
retailer~$1$ must join at least one order in every subinterval of length~$1$,
so the warehouse-order cost is at least $1$ per time unit.
Retailer $2$ must join at least one order in any subinterval of length $\costof 2$,
so his order cost (not including the warehouse-order cost)
is at least~$\costof 2$ for every $\costof 2$ time units, i.e., 1 per time unit.
Thus, even if the two retailers could coordinate orders perfectly,
the total cost would be at least $2$ per time unit.

We show next that, because the orders cannot be coordinated perfectly,
the cost of any solution is at least about $1+\costof 2 \approx 1+\sqrt 2> 2$ per time unit.

Throughout this section, $o(1)$ denotes a term that tends to zero as $\maxDeadline$ tends to infinity.


\begin{lemma}\label{lemma:lb}
For the above instance, the optimal cost is at least $(1+\sqrt 2 - o(1)) \maxDeadline$.
\end{lemma}
\begin{proof}
  Fix any schedule for the instance. 
  Partition the time interval $[0,\maxDeadline]$
  into half-open subintervals,
  separated 
  by the times of orders that retailer 2 joins.
  Consider any such subinterval $(t,t']$.
  That is, 
  retailer 2 joins an order at time $t$ (or $t=0$),
  and, during the subinterval $(t,t']$
  retailer 2 joins an order at time $t'$ and no other time.
  We argue that the cost per time unit during $(t,t']$ 
  is at least $1+\sqrt 2 - o(1)$.

  First consider the case that the schedule has an order during $(t,t')$.
  The order at time $t'$ costs $1+\costof 2$;
  the additional order during $(t,t')$ costs at least 1.
  The interval length $t'-t$ is at most $\costof 2 + 1/\maxDeadline$ 
  (otherwise it would contain a demand of retailer 2,
  which, by the choice of $t$ and $t'$, would be unsatisfied).
  Thus, the cost per unit time is at least
  $(1+1+\costof{2})/(\costof{2}+1/\maxDeadline) = 1+\sqrt 2 - o(1)$.

  In the remaining case there is no order during $(t,t')$.
  The interval length $t'-t$ is at most $1+1/\maxDeadline$
  (otherwise the interval would contain an unsatisfied demand of retailer 1). 
  The order at time $t'$ costs $1+\costof 2$.
  Thus, the cost per time unit is at least
  $(1+\costof 2)/(1+1/\maxDeadline) = 1+\sqrt 2 - o(1)$.

  The last subinterval $(t,t']$ has to end at time $\maxDeadline-\costof{2}$ or later,
  so, in each subinterval of $[0,\maxDeadline-\costof 2] = [0,(1-o(1))\maxDeadline]$ 
  the algorithm pays at least $1+\sqrt{2} - o(1)$ per time unit.
\end{proof}

Next we observe that there is a fractional LP solution $\mbfx$ that costs 2 per time unit:
for each $t\in \universe$, 
let $x_t = x^1_t = 1/\maxDeadline$ and  $x^2_t = 1/(\costof 2 \maxDeadline)$.

Recall that $\universe$ contains the integer multiples of $1/\maxDeadline$ in $[0,\maxDeadline]$.
By calculation, the LP solution is feasible.
(For each demand of retailer 1,
the demand period intersects $\universe$ in $\maxDeadline+1$ times,
and so is satisfied.
Likewise, for each demand of retailer 2,
the demand period intersects $\universe$
in $c_2\maxDeadline+1$ times,
and so is satisfied.)

Since $|\universe| = \maxDeadline^2+1$,
the cost of fractional solution $\mbfx$ 
is $2(\maxDeadline^2+1)/\maxDeadline = (2+o(1))\maxDeadline$.
By \lref[Lemma]{lemma:lb},
any integer solution has cost $(1+\sqrt 2-o(1))\maxDeadline$.
Since the term $o(1)$ can be made arbitrarily small by choosing $\maxDeadline$ large,
the integrality gap of the LP is at least $(1+\sqrt 2)/2$,
proving \lref[Theorem]{thm: integrality gap 1.207}.


\myparagraph{Increasing the lower bound by a computer-based proof.} 
We now sketch how to increase the lower bound slightly to $1.245$.
We reduce the problem to that 
of maximizing the minimum mean cycle in a finite configuration graph,
which we solve with the help of linear programming.

Let the universe be $\universe = [\maxDeadline]$,
where $\maxDeadline$ is an arbitrarily large integer (tending to infinity).
Fix a~vector $\duration{} \in \nat_+^m$.
(Later we choose $m=5$ and $\duration{} = (6, 7, 8, 9, 11)$.)
Focus on instances where, for each retailer $\rho\in [m]$,
the retailer has uniform demands --- one for every subinterval of length $\duration{\rho}$.
That is, the demand set $\demands$ is
\(
\demands = \big\{ (\rho, t, t+\duration{\rho}-1)  \suchthat  \rho\in [m];\, \{t, t+\duration{\rho}-1\} \subseteq \universe\big\}.
\)
\smallskip

Define the ``uniform'' fractional solution $\mbfx$
by $x^\rho_t = 1/\duration{\rho}$ for all $\rho \in [m]$
and $x_t = \max_\rho 1/\duration{\rho}$ for all $t\in \universe$.
This solution is feasible for the LP and has cost 
$(\COST /\min_{\rho} \duration{\rho} + \sum_{\rho} \costof{\rho}/\duration{\rho})\cdot\maxDeadline$.

To bound the integer schedules we use a configuration graph.
Given any feasible schedule, 
for each order in the schedule,
define the \emph{configuration} at the order time, $t$,
to be a vector $\sigma \in \nat_{+}^m$ where
$\sigma_\rho$ is the time elapsed since $\rho$ last joined an order,
up to and including time $t$.
(If retailer~$\rho$ has not yet joined any order by time $t$, take $\sigma_\rho = t$.)
Feasibility of the schedule
ensures that each configuration $\sigma$ satisfies
$\sigma_\rho < \duration{\rho}$ for all $\rho\in[m]$,
because otherwise one of retailer $\rho$'s demands would not be met.
Thus, there are at most $\prod_\rho \duration {\rho}$ distinct configurations.
These are the nodes of the configuration graph.

The edges of the graph model possible transitions from one order to the next.
Let $\sigma$ denote the configuration at some order time $t$.
Let $\sigma'$ denote the next configuration, at the next order time $t'>t$.
Let $R$ be the set of retailers in the order at time $t'$.
Then $\sigma_\rho' = 0$ if $\rho\in R$
and $\sigma_\rho' = \sigma_\rho + t'-t$ otherwise.
To ensure feasibility, for all $\rho\in [m]$, it must be that
$\sigma_\rho + t'-t \le \duration{\rho}$ (even for $\rho\in R$).
Without loss of generality,
assume that $t'$ is maximal subject to this constraint
(otherwise, delay the second order without increasing the schedule cost).
That is, $t' = t+ \confDeadline(\sigma)$, where 
$\confDeadline(\sigma) = \min\{\duration{\rho} - \sigma_\rho \suchthat \rho \in [m]\}$.
For each $\sigma$ and $\sigma'$ that relate as described above,
the configuration graph has a directed edge from $\sigma$ to $\sigma'$.
The {\em cost} of the edge is the cost of the corresponding order,
$\costfn {\sigma, \sigma'} = \COST+\sum_{\rho\in R} \costof{\rho}$.
Let $G=(V,E)$ be the subgraph induced by nodes reachable from
the start configuration $(0,\ldots,0)$.
Explicitly construct the graph $G$,
labeling each edge $(\sigma,\sigma')$
with its elapsed time $\confDeadline(\sigma)$,
order set $R(\sigma,\sigma')$,
and cost function $\costfn {\sigma, \sigma'}$.

\begin{figure}[t]
\hrulefill
\begin{alignat}{6}
  {\textrm{Given $G=(V,E)$ and $\duration{}$, 
      choose $\costVector$, $\COST$, $\costfn{}$, and $\Phi$
      to maximize $\lambda$ subject to}}\span\span\span\span\span\span \notag
        \\     
  \COST &\;\ge && 0
        \notag
  \\
        \costof{\rho} &\;\ge && 0
                      & &\forall \rho \in \{1, \ldots, m\}
       \notag
        \\  \textstyle
  \COST /\min_{\rho} \duration{\rho} + \sum_{\rho} \costof{\rho} / \duration{\rho}
        & \;\le && 1
        \label{lb2:opt}
        \\
        %
        \costfn{\sigma,\sigma'} 
        &  \;= &\;\;& \textstyle
        \COST + \sum_{\rho\in R(\sigma,\sigma')} \costof{\rho}
                      & ~~~~ &\forall
                                                                                        (\sigma,\sigma') \in E
       \label{lb2:cost} 
        \\    
        \Phi_{\sigma} + \costfn{\sigma,\sigma'} - \Phi_{\sigma'} 
        & \;\ge &&  \confDeadline(\sigma) \, \lambda
                      & &\forall
                                                                                        (\sigma,\sigma') \in E.
       \label{lb2:pi}
\end{alignat}
\vspace*{-4.5ex}

\hrulefill
\caption{A linear program to choose the costs 
  to maximize the integrality gap $\lambda$, given
  the configuration graph $G$ and demand durations $\duration{\rho}$.}\label{fig:gap}
\end{figure}

In the limit (as $\maxDeadline \rightarrow \infty$),
every schedule will incur cost at least $\lambda$ per time unit
as long as, for every cycle $C$ in this graph,
the sum of the costs of the edges in $C$
is at least $\lambda$ times the sum of the times elapsed on the edges in $C$.
The integrality gap is then at least $\lambda$ 
divided by the cost (per time unit) of the uniform fractional solution defined above.
Note that $\lambda$ is essentially the minimum mean cycle cost in $G$.

Given any fixed $m$ and vector $\duration{}$ of period durations,
the configuration graph is determined.
We will choose the costs
(the warehouse-order cost $\COST$ and each retailer cost $\costof\rho$)
to maximize the resulting value of $\lambda$,
subject to the constraint that the cost of the uniform fractional solution is at most 1.
The linear program (LP) in \lref[Figure]{fig:gap} does this.
The LP is based on the standard LP dual for minimum mean cycle,
but the edge costs are not determined --- 
they are chosen subject to appropriate side constraints.
Constraint~\eqref{lb2:opt} of the  LP
is that the uniform fractional solution costs at most 1 per time unit.

We implemented this construction and,
for various manually selected duration vectors $\duration{}$ with small $m$,
we solved the linear program to find the maximum $\lambda$.
For efficiency, we used the following observations to prune the configuration graph.
We ordered $\duration{}$ so that $\duration{1} = \min_\rho \duration{\rho}$.
Without loss of generality we constrained $\costVector_1$ to be $0$
(otherwise replace $\COST$ by $\COST+\costVector_1$ and $\costVector_1$ by 0;
by inspection this gives an equivalent LP).
Then, since $\costVector_1=0$, without loss of generality, 
we assumed that retailer 1 is in every order $R(\sigma,\sigma')$.
We pruned the graph further using similar elementary heuristics.

The best ratio we found was for $\duration{} = (6, 7, 8, 9, 11)$.
The pruned graph $G$ had about two thousand vertices.
$\COST$ was about $2.49$, $\costVector_1$ was 0, 
every other $\costVector_\rho$ was about $1.245$,
and $\lambda$ was just above 1.245.



\section{Lower Bound of 1.2 for Equal-Length Periods}
\label{sec: lower bound 1.2 for equal}



In this section we show an integrality gap 
for the linear program for {\JRPD} for instances with equal-length demand periods.
The gap is for an instance with three retailers.
Numbering them for convenience starting from $0$,
their order costs are $\costof{0}=\costof{1}=\costof{2} = \onethird$. 
The warehouse-order cost is $\COST=1$.

In the demand set $\demands$, all intervals (demand periods) have length $2$.
Choose some large constant~$\maxDeadline$ that is a multiple of $3$. 
As illustrated in \lref[Figure]{fig: uniform gap instance},
for $\rho = 0,1,2$, retailer $\rho$'s demand periods are
\begin{equation*}
[3i+\rho,3i+2+\rho] \quad\textrm{and}\quad (3i-\threehalves+\rho,3i+\half+\rho), 
			\quad\textrm{for}\ i = 0,...,\maxDeadline/3.
\end{equation*}
\begin{figure}[t]
\begin{center}
\includegraphics[width=6in]{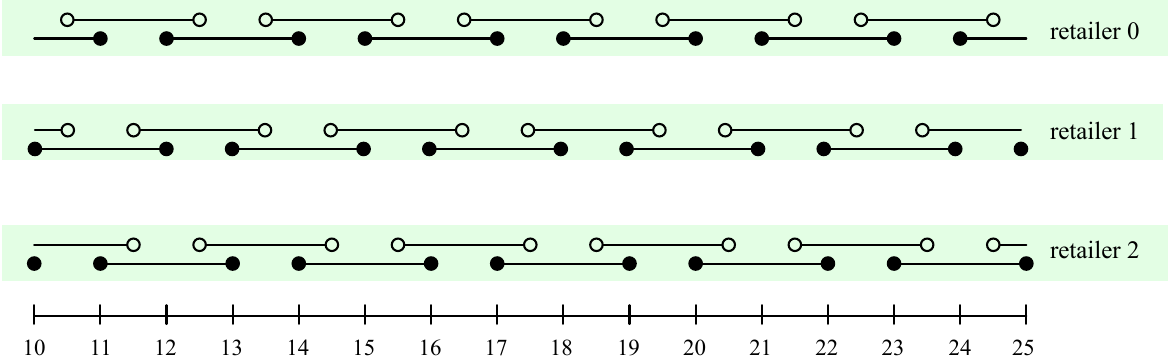}
\caption{The demand periods in $\demands$.}
\label{fig: uniform gap instance}
\end{center}
\end{figure}
To simplify the presentation, allow the demand periods to be either closed or
open intervals. (This is only for convenience: to ``close'' any open interval,
replace it by a closed interval slightly shifted to the right, with the shifts
increasing over time. Specifically, replace each interval $(a,a+2)$ by the
interval $[a+a/\maxDeadline^2,a+a/\maxDeadline^2+2]$. This preserves the
intersection pattern of all intervals; in particular any
two intervals $(a,a+2)$ and $(a+2,a+4)$ will remain disjoint after this
change. Therefore this change
does not affect the values of the optimal fractional and integral solutions
described below.)

This instance admits a fractional solution $\mbfx$ 
whose cost is $\fivesixths \maxDeadline + O(1)$:
For each integer time $t$,
place a $\half$-order that is joined by two retailers: 
the retailer $t \bmod 3$ whose closed interval starts at~$t$,
and the retailer $(t+1) \bmod 3$ whose closed interval ends at $t$
(let $s=(t-1)\bmod 3$; then $x_t^s = 0$, while $x_t = x^\rho_t = \half$ for $\rho\ne s$).
The cost of the $\half$-order at each time $t$ is $\half(2\cdot\onethird + 1) = \fivesixths$.

\smallskip

Now consider any integer solution $\hatmbfx$. 
Without loss of generality, assume that $\hatmbfx$ places orders only at integer times. 
(Any order placed at a fractional time $\tau$ can be shifted
either left or right to the first integer without changing the set of demands served.)

If a retailer $\rho$ has an order at time $t$, 
then its next order must be at $t+1$, $t+2$ or $t+3$, 
because for any $t$ the interval $(t,t+4)$ contains a demand period of retailer $\rho$. 
Thus, each retailer $\rho$ has to 
join some order in each triple $\{t+1,t+2,t+3\}$.
So, the retailer-cost per time unit for $\rho$ is at least $\costof{\rho}/3 = \frac{1}{9}$,
and the total retailer-cost per time unit is at least $\frac{3}{9} = \onethird$.

Similarly, if there is a warehouse order at time $t$, then the next order must be at
time $t+1$ or $t+2$, because the interval $(t,t+3)$ contains a demand period of some retailer.
So there must be some order in each pair $\{t+1,t+2\}$.
So the warehouse-order cost per time unit is at least $\COST/2 = \onehalf$.

In total, the total cost per time unit is at least $\onethird+\onehalf = \fivesixths$
(matching the cost of the fractional solution),
even if the retailer orders could be coordinated perfectly with the warehouse orders.
In the rest of this section,
we show that,
because perfect coordination is not possible,
the actual cost is higher.


Recall that (without loss of generality) in $\hatmbfx$ orders occur only at integer times.
For each $\rho$, call the endpoints of $\rho$'s closed intervals $\rho$'s \emph{endpoint times}
(these are times $t$ with $(t-\rho) \bmod 3 \in \{0,2\}$). 
Call the midpoints of $\rho$'s closed intervals $\rho$'s \emph{inner times}
(these are times $t$ with $(t-\rho) \bmod 3 = 1$).
Assume (without loss of generality by the feasibility and optimality of $\hatmbfx$)
that $\hatmbfx$ satisfies the following conditions:
\begin{enumerate}[label=(c\arabic*),ref=(c\arabic*),leftmargin=*]
	%
	\item\label{item:c1} For any $\rho$ and any pair of consecutive endpoint times $\{t,t+1\}$ of $\rho$, 
			$\rho$ joins an order at time $t$ or $t+1$ (because $\rho$
			has an open interval containing only integers $\{t,t+1\}$).
	\item\label{item:c2} If $t$ is an inner time of retailer $\rho$ and $\rho$ joins an order at
			$t$, then
			\begin{description}
				\item{(c2.1)} there is no order at time $t-1$ or $t+1$, and
				\item{(c2.2)} all retailers have orders at time $t$.
			\end{description}
			(For (c2.1): if there is an order at time $t-1$ or $t+1$,
			then retailer $\rho$ can be moved to that order from the order at time $t$.
			For (c2.2): 
			for each retailer $\rho'\ne \rho$ both time $t$ and either $t-1$ or $t+1$
                        are endpoint times,
                        but per (c2.1) there is no order at $t-1$ or at $t+1$,
                        so, by (c1), $\rho'$ must have an order at $t$.)
\end{enumerate}

\begin{figure}[t]
\begin{center}
\includegraphics[width=2.25in]{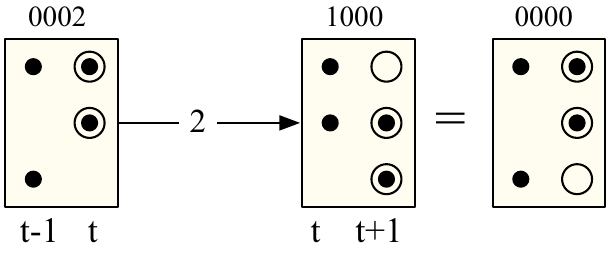}
\caption{Graphical representation of a transition.
Configuration $0002$ is on the left.   At time $t+1$ all retailers issue an order.
After removing spurious orders (keeping track of only the last one),
the new configuration is $1000$, which is equivalent (by symmetry) to $0000$.
The transition costs $1+\onethird\cdot 3 = 2$.
}
\label{fig: uniform gap configuration}
\end{center}
\end{figure}

The idea of the analysis is similar to the argument in
\lref[Section]{sec: lower bound 1.207} for the lower bound of $1.245$ for general instances:
represent the possible schedules by walks in a finite configuration graph.

Fix any feasible schedule.
At any integer time $t$, the {\em configuration} of the schedule at time $t$ 
is the 4-digit string $s\sigma_0\sigma_1\sigma_2$,
where $s = t\bmod 3$ and, for each retailer $\rho = 0,1,2$,
the elapsed time since the retailer last joined an order is $\sigma_\rho$.
Since the schedule is feasible, each $\sigma_\rho$ is in $\{0,1,2\}$,
so there are at most $3^4$ possible configurations.

Suppose a schedule is in configuration $s\sigma_0\sigma_1\sigma_2$ at time $t$,
then transitions to $s'\delta_0\delta_1\delta_2$ at time $t+1$.
Necessarily $s' = (s+1)\bmod 3$.
Let $R$ be the set of retailers (possibly empty) that join the order (if any) at time $t+1$. 
For each retailer $\rho$, (i) if $\rho\notin R$ then $\delta_\rho = \sigma_\rho+1$,
while (ii) if $\rho\in R$ then $\delta_\rho = 0$. 
Say a pair $s\sigma_0\sigma_1\sigma_2 \rightarrow s'\delta_0\delta_1\delta_2$
is a {\em possible transition} if the pair relates this way for some $R$.
The {\em cost} of the transition equals the cost of the order:
$0$ if $R=\emptyset$, or $1 + \onethird |R|$ otherwise.
(Here, unlike in \lref[Section]{sec: lower bound 1.207}, 
the elapsed time per transition is always 1, and $R$ can be empty.)

Represent each possible configuration graphically by a rectangle 
with a row for each retailer $\rho=0,1,2$.
Each row has two cells, representing times $t-1$ and $t$, respectively:
a circle in the first cell means $\sigma_\rho = 1$,
a circle in the second cell means $\sigma_\rho = 0$,
no circle means $\sigma_\rho = 2$.
A dot in the cell means that time is an endpoint time for the retailer;
no dot means the time is an inner time.
The dot pattern of any one row determines, and is determined by, $s$.
\lref[Figure]{fig: uniform gap configuration} shows an example of a single transition.

Any two configurations are {\em equivalent} 
if one can be obtained from the other by permuting 
the rows of the graphical representation (i.e., the retailers).
Each graphical representation has one row with a dot in both columns,
one row with a dot in the second column only,
and one row with a dot in the first column only.
Define the {\em canonical representative} of an equivalence class
to be the configuration in which these three rows are, respectively,
first, second, and third.
In all such configurations, $s$ is zero, so there are at most $3^3$ equivalence classes.

Now restrict the configurations further to those that are realizable in $\hatmbfx$,
in that they don't violate conditions \ref{item:c1}--\ref{item:c2}:
by \lref[Condition]{item:c1}, if a row has two dots, then one of the dots must be circled;
by \lref[Condition]{item:c2}, if a column has a dot-less circle, then all cells in the column have circles.
Note that the equivalence relation respects these conditions:
in a given equivalence class, either all configurations meet both conditions, or none do.

Finally, define graph $G$ to have a node for each realizable equivalence class.
For each possible transition $\sigma \rightarrow \sigma'$
between remaining configurations $\sigma$ and $\sigma'$,
add a directed edge in $G$ from the equivalence class of $\sigma$ to that of $\sigma'$.
Give the edge cost equal to the cost of the transition.
By a~routine but tedious calculation,
$G$ is the 10-node graph shown in \lref[Figure]{fig: uniform gap nfa}.
Each node is represented by its canonical representative.

\begin{figure}[t]
\begin{center}
\includegraphics[width=4.5in]{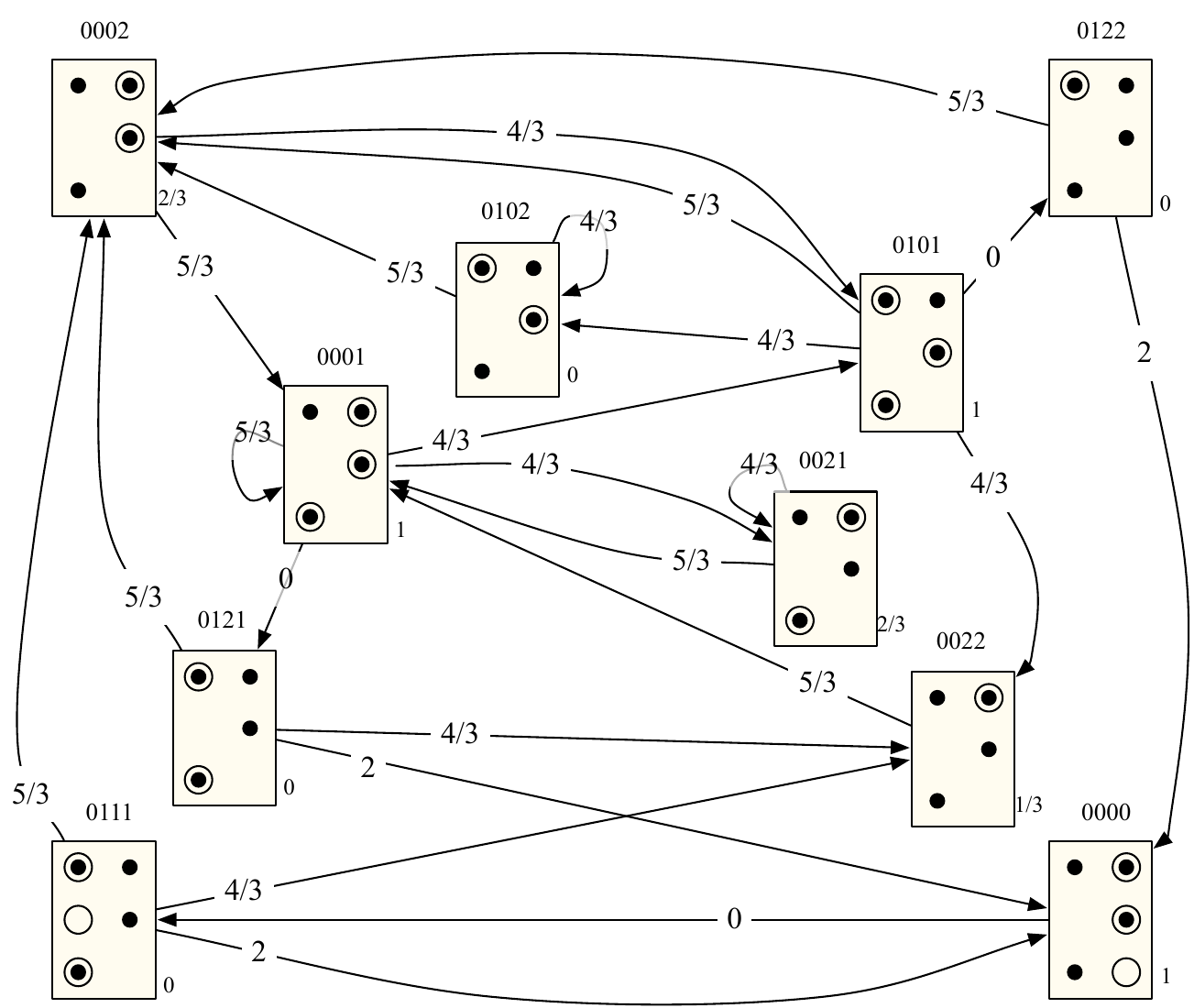}
\caption{The complete transition diagram.}
\label{fig: uniform gap nfa}
\end{center}
\end{figure}

Next we argue that every cycle in this graph has average edge cost at least 1.
Define the following potential function $\Phi(\sigma)$ on configurations:

\begin{center}\small
	\renewcommand\arraystretch{2}
\begin{tabular}{|c|c|c|c|c|c|c|c|c|c|c|} \hline
$\sigma$ 
	& ${0000}$ & ${0001}$ & ${0002}$ & ${0021}$ & ${0022}$ 
		& ${0101}$ & ${0102}$ & ${0111}$ & ${0121}$ & ${0122}$ \\ \hline
$\Phi(\sigma)$
	&  $1$ & $1$ & $\twothirds$ & $\twothirds$ & $\onethird$ 
			& $1$ & $0$ & $0$ & $0$ & $0$ \\ \hline
\end{tabular}
\end{center}

It is routine (if tedious) to verify that for each edge $\sigma \rightarrow
\sigma'$, its cost $\costfn{\sigma,\sigma'}$ satisfies
\begin{equation}
	\costfn{\sigma,\sigma'}  \ge \Phi(\sigma') - \Phi(\sigma) + 1.
		\label{eqn: uniform gap potential}
\end{equation}
For any path of length $\maxDeadline$ in $G$
(summing inequality~\eqref{eqn: uniform gap potential} along all edges on this path)
the cost of the path is $\maxDeadline - O(1)$.

The equivalence classes of the configurations of the schedule $\hatmbfx$
(one for each time $t\in[\maxDeadline]$)
form a path of length $\maxDeadline$ in $G$.
The cost of $\hatmbfx$ equals the cost of the path,
which must be at least $\maxDeadline - O(1)$.
Recalling that there is a fractional solution of cost $\fivesixths \maxDeadline+O(1)$,
this shows that the integrality gap is at least $\frac{6}{5} = 1.2$:

\begin{theorem}\label{thm: integrality gap 1.2 equal}
For instances with all demand periods equal,
the integrality gap of the LP at least~$1.2$.
\end{theorem}

(The bound in the above proof is tight:
the following cycles have average edge cost 1:
${0000}\to {0111}\to {0000}$,
~${0101}\to {0122}\to {0002}\to {0101}$, and
${0121}\to {0022}\to {0001}\to {0121}$.)



\section{APX-Hardness for Equal-Length Demand Periods}
\label{sec: apx hardness}



Let $\JRPDEfour$ be the restriction of $\JRPD$ where each retailer has at most
four demands and all demand periods are of the same length.  
In this section, we show that $\JRPDEfour$ is $\APX$-hard
by giving a PTAS-reduction from Vertex Cover in cubic graphs,
that is graphs with every vertex of degree three.
Vertex Cover is known to be $\APX$-complete for such graphs \cite{alimonti-apx-complete-cubic}.

Roughly speaking, given any cubic graph $G = (V,E)$ with $n$ vertices
and $m$ ($= 3n/2$) edges, the reduction produces an instance $\JG$ of $\JRPDEfour$, 
such that $G$ has a vertex cover of size $K$ 
iff $\JG$ has a schedule of cost $10.5 n + K + 6$.
Since any vertex cover has size at least $m/3 = n/2$,
this is a PTAS-reduction. 
Such reduction and a PTAS for $\JG$ would give a PTAS for Vertex Cover in degree-three graphs.


\myparagraph{Construction of instance $\JG$. }
Fix a given undirected cubic graph $G$ with vertex set $V=\{0,\ldots,{n-1}\}$
and edges $e_0,\ldots,e_{m-1}$.  
$\JG$ consists of $1 + m + n$ gadgets: 
one \emph{support gadget} $\SG$,
an \emph{edge gadget} $\EG_j$ for each edge~$e_j$, 
and a \emph{vertex gadget} $\VG_i$ for each vertex~$i$.
All retailer and order costs equal 1
($\COST=\costof{\rho} =1$)
and all demand periods have length $4m$.
All release times and deadlines are integers in the interval $[-4m,12m+1]$;
without loss of generality, restrict attention to schedules
with integer order times.
The gadgets are as follows.

\begin{figure}[t]
\begin{center}
\includegraphics[width=6in]{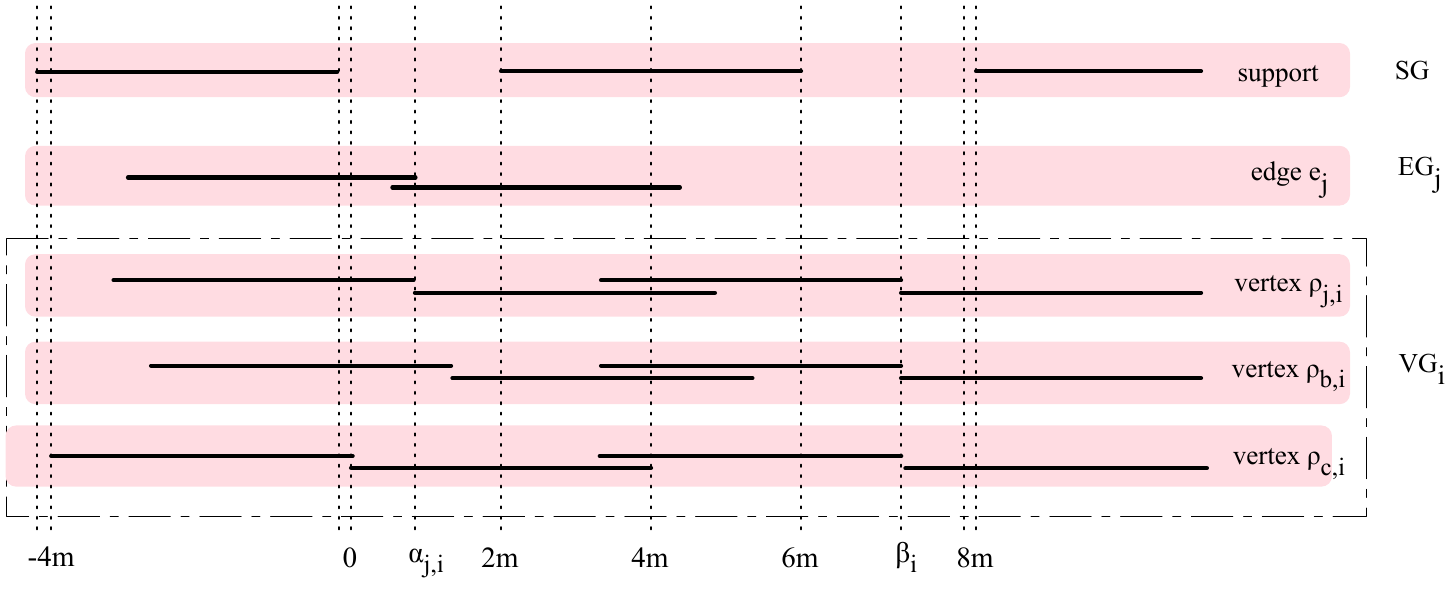}
\caption{The construction of instance $\JG$. The figure shows
the support gadget $\SG$, an edge gadget $\EG_j$ and a vertex gadget $\VG_i$
for a vertex $i$ with edges $e_j$, $e_b$, $e_c$.
Shaded regions represent retailers. Horizontal line segments represent 
demand periods.}
\label{fig: apx-hardness}
\end{center}
\end{figure}

\emmedparagraph{The support gadget $\SG$.}
$\SG$ has its own retailer,
the {\em support-gadget retailer}, having three demands
with periods $[-4m-1,-1]$, $[2m,6m]$, and $[8m+1,12m+1]$.
These periods are separated by two {\em gaps} of length $2m$.
Call the times $\{-1, 4m, 8m+1\}$ {\em support gadget} times;
orders at these times suffice to satisfy the three demands.

\emmedparagraph{Edge gadgets $\EG_j$.}
Each edge $e_j$ in $G$ has its own {\em edge} retailer, having two demands 
with, respectively, periods $[2j+1-4m,2j+1]$ and $[2j,2j+4m]$.
These demands can be satisfied with a~single order at time $2j$ or $2j+1$.
Think of these two times as being associated with this edge $e_j$, each associated
with one endpoint of $e_j$ (as explained below).
All such times are in the first gap, $[0,2m-1]$.

Let $e_j = \{i, i'\}$, that is $i,i'$ are the endpoints of edge $e_j$.
Intuitively, to satisfy $e_j$'s retailer cheaply, there can be an order at time $2j$ or at $2j+1$;
this models that $e_j$ can be covered by either of its two endpoints.
We associate one of the two times $2j$ or $2j+1$ 
(it does not matter which one) with $i$ and call it $\alpha_{j,i}$, 
while the other one is associated with $i'$ and, naturally, called $\alpha_{j,i'}$.

\emmedparagraph{Vertex gadgets $\VG_i$.}
For each vertex $i$, define its ``vertex'' time $\beta_i = 8m-i$.
(All such times are in the second gap, $[6m+1,8m]$.)
Do the following for each of vertex $i$'s edges.
Let $e_j$ denote the edge.
Add a new {\em vertex retailer} $\rho_{j,i}$,
having four demands with respective periods

\[
[\alpha_{j,i}-4m,\,\alpha_{j,i}],
~
[\alpha_{j,i},\,\alpha_{j,i}+4m], 
~
[\beta_i-4m,\,\beta_i],
\textrm{ and } [\beta_i,\,\beta_i+4m].
\]

Denote the periods of these four demands as
$\llQ_{j,i}$, $\lQ_{j,i}$, $\rQ_{j,i}$ and $\rrQ_{j,i}$, in the above order.
Note that $\llQ_{j,i}\cap \lQ_{j,i} = \braced{\alpha_{j,i}}$,
$\lQ_{j,i}\cap \rQ_{j,i} = [\beta_i-4m,\alpha_{j,i}+4m]\neq\emptyset$,
$\rQ_{j,i}\cap \rrQ_{j,i} = \braced{\beta_i}$, 
but otherwise the four demand periods are pairwise-disjoint.

The important property is that retailer $\rho_{j,i}$'s four demands can be satisfied
with two orders iff the two orders are at times $\alpha_{j,i}$ and $\beta_i$.
Also, if orders do happen to be placed at these two times, then
(because $\alpha_{j,i}$ is one of the two times belonging to $e_j$)
the order at time $\alpha_{j,i}$ can satisfy both demands of edge $e_j$'s retailer
with no additional warehouse cost for that retailer.

\lref[Figure]{fig: apx-hardness} illustrates the reduction.



\begin{lemma}
\label{lem:cover-to-schedule}
If $G$ has a vertex cover $U$ of size $K$, 
then $\JG$ has a schedule of cost at most $10.5n+K+6$. 
\end{lemma}

\begin{proof}
Let $U$ be a vertex cover of size $K$.

To construct the schedule for $\JG$, 
start with orders at the support-gadget times $\{-1, 4m, 8m+1\}$, 
each of which is joined by the support-gadget retailer.  This costs $6$.

Next, consider each vertex $i$.
If $i\notin U$, then have $i$'s vertex retailers 
(that is, retailers $\rho_{j,i}$ for $e_j\ni i$)
join the support-gadget orders at times $\{-1, 4m, 8m+1\}$.
This option increases the schedule cost by $3 \cdot 3 = 9$.

Otherwise (for $i \in U$),  create an order at time $\beta_i$. For each of
three $i$'s retailers $\rho_{j,i}$ create an~order at time $\alpha_{j,i}$, and
have the retailer join that order and the one at time $\beta_i$. The order at
time $\beta_i$ is shared between $i$'s three retailers, so that order costs $1
+ 3 = 4$. Each of the three orders created at times $\alpha_{j,i}$ costs $2$.
The total cost for the four orders is $3 \cdot 2 + 4 = 10$.

Next, consider each edge $e_j$.
As $U$ is a vertex cover, some vertex $i\in U$ covers $e_j$, that is $i\in U\cap e_j$.
By the construction of the $i$'s gadget, since $i\in U$,
there is already an order at $e_j$'s time $\alpha_{j,i}$.
Have edge $e_j$'s retailer join this order. Both demands of this retailer
will be satisfied, since they both contain $\alpha_{j,i}$.
The cost increases by 1 per edge.

Adding up the above costs,
the total cost is $6 + 9(n-K) + 10K + m = 10.5n+K+6$.
\end{proof}


\myparagraph{Recovering a vertex cover from an order schedule.}
We now show the converse: given any order schedule of cost $10.5n+K+6$ for
$\JG$, we can compute a vertex cover of $G$ of size $K$.
Recall that $\alpha_{j,i}$ denotes the time (either $2j$ or $2j+1$) that edge $e_j$ shares with endpoint $i$.

Say that an order schedule $S$ meeting the following desirable conditions is in {\em normal form}:
\begin{enumerate}[label=(nf\arabic*),ref=(nf\arabic*),leftmargin=*]
\item \label{item:nf1}
  In $S$, the support-gadget retailer joins orders at the support-gadget times $\{-1, 4m, 8m+1\}$.
\item \label{item:nf2}
  In $S$, for each edge $e_j=\{i,i'\}$, the edge's retailer joins an order at time $\alpha_{j,i}$ or $\alpha_{j,i'}$.
\item \label{item:nf3}
	For each vertex $i$, exactly one of the following two conditions holds:
		\begin{enumerate}[label=(\alph*),ref=\ref{item:nf3}\,(\alph*),leftmargin=*]
			\item\label{item:nf3.1}  each of $i$'s retailers $\rho_{j,i}$ 
                          joins orders at times $\beta_i$ and $\alpha_{j,i}$;
			\item\label{item:nf3.2} each of $i$'s retailers $\rho_{j,i}$
                          joins the support-gadget orders at times $\{-1, 4m, 8m+1\}$,
                          and $S$ has no order at time $\beta_i$ nor at any time $\alpha_{j,i}$.
		\end{enumerate}
\item\label{item:nf4}
  For each edge $e_j = \braced{i,i'}$, at least one of its endpoints $i,i'$ satisfies \lref[Condition]{item:nf3.1}.
\item \label{item:nf5} $S$ has no orders other than the ones described above.
\end{enumerate}

Given any feasible order schedule,
we can put it in normal form without increasing the cost:


\begin{lemma}
\label{lem:normalization}
Given any order schedule $S$ for $\JG$, one can compute in polynomial time
a normal-form schedule $S'$ whose cost is at most the cost of $S$.
\end{lemma}

\begin{proof}
Modify $S$ to satisfy \lref[Conditions]{item:nf1} through\lref{item:nf5} in turn,
maintaining feasibility without increasing the cost, as follows.

\noindent{\em\lref{item:nf1}}
Combine all orders in times $(-\infty,-1]$ into a single order at time $-1$.
By inspection, the earliest deadline of any demand is
the deadline of the first support-gadget demand, which is $-1$.
So this modification is {\em safe} --- it maintains feasibility without increasing the cost
--- and the support-gadget retailer joins the order at time $-1$.

Likewise, combine all orders in times $[8m+1,\infty)$ into a single order at time $8m+1$.
The last release time of any demand is
the release time of the last support-gadget demand, which is $8m+1$.
So this modification is also safe and the support-gadget retailer joins the order at time $8m+1$.

Finally, combine all orders in times $[2m,6m]$ into a single order at time $4m$.
The support-gadget has demand period $[2m,6m]$, so the support-gadget retailer
must join at least one order at some time in $[2m,6m]$. Thus
this modification does not increase the cost.
There are no deadlines in $[2m,4m)$ and no release times in $(4m, 6m]$, 
so the modification maintains feasibility.

The resulting schedule satisfies \lref[Condition]{item:nf1}.

\smallskip

\noindent {\em \lref{item:nf2}}
Consider any edge $e_j=\{i,i'\}$.
If the edge's retailer does not join an order at one of the times $\alpha_{j,i}$ or $\alpha_{j,i'}$
associated with $e_j$ then, by inspection of his demands, the retailer must join at least two orders.
Remove him from these two orders, reducing the cost by two,
and have him join a (possibly) new order at, time, say $\alpha_{j,i}$,
satisfying both his demands and increasing the cost by two or less.
The resulting schedule satisfies \lref[Conditions]{item:nf1} and\lref{item:nf2}.

\smallskip

\noindent {\em \lref{item:nf3}}
Consider any vertex $i$. Assume first that $S$ has an order at time $\beta_i$.
If any of vertex $i$'s retailers, say $\rho_{j,i}$, does not join the order at
time $\beta_i$, then move him from some order at any later time (there must be
one in his last demand period $\rrQ_{j,i} = [\beta_i,\beta_i+4m]$) to the
existing order at time $\beta_i$. Then, if retailer $\rho_{j,i}$ does not join
an order at time $\alpha_{j,i}$ (or there is no such order), he must
participate in at least two orders at times other than $\beta_i$; remove him
from these two orders and have him join a (possibly new) order at time
$\alpha_{j,i}$. Finally, remove the retailer from all orders other than those
at times $\beta_i$ and $\alpha_{j,i}$. These operations are safe, and now
vertex $i$ meets \lref[Condition]{item:nf3}.

In the other case $S$ has no order at time $\beta_i$.
Then each of vertex $i$'s retailers must join at least three orders. 
Remove each such retailer from all those orders
and add him instead to the existing orders at the support-gadget times $\{-1,4m,8m+1\}$.
Note that support-gadget times are different than all times $\alpha_{j,i}$.
This is safe, does not increase the cost,
and now vertex $i$ meets \lref[Condition]{item:nf3}.

As these operations affect only vertex retailers,
Conditions\lref{item:nf1} and\lref{item:nf2} still hold too.

\smallskip

\noindent{\em \lref{item:nf4}}
Consider any edge $e_j = \braced{i,i'}$. By \lref[Condition]{item:nf2}, there is an order at time
$\alpha_{j,i}$ or $\alpha_{j,i'}$. By symmetry, we can assume that there is an order at time
$\alpha_{j,i}$. If vertex $i$ satisfies \lref[Condition]{item:nf3.1}, we are done.
Otherwise, $i$ satisfies \lref[Condition]{item:nf3.2}. Remove each of $i$'s three retailers
from orders at times $-1,4m,8m+1$, reducing $i$'s cost by $9$. Then:
create an order at time $\beta_i$ and have them join this order (at cost $4$),
have retailer $\rho_{j,i}$ join the order at time $\alpha_{j,i}$ (at cost $1$),
and have the other two retailers $\rho_{j',i}$ and $\rho_{j",i}$
join (possibly new) orders at times $\alpha_{j',i}$ and $\alpha_{j",i}$ (at cost at most~$2$~each).
The total cost of these new orders is at most $9$, thus this modification does not
increase the overall cost, and afterwards $e_j$ satisfies \lref[Condition]{item:nf4}.

\smallskip

\noindent{\em \lref{item:nf5}}
Remove all retailers from orders not described above, then delete empty orders.
As the orders described above satisfy all the demands, this is safe.
Now \lref[Condition]{item:nf5} holds as well.
\end{proof}


\begin{lemma}
\label{lem:schedule-to-cover}
Given an order schedule $S$ for $\JG$
of cost $10.5n+K+6$, one can compute in polynomial time a vertex cover of $G$ of size $K$.
\end{lemma}

\begin{proof}
By \lref[Lemma]{lem:normalization}, without loss of generality we can assume
that that $S$ is in normal form.

By \lref[Condition]{item:nf1}, the cost for the support-gadget orders
(at times $\{-1,4m,8m+1\}$, but not yet counting the retailer cost for any vertices) is $3\cdot 2 = 6$.

By \lref[Condition]{item:nf3}, the cost associated with vertices is as follows.
Fix any vertex $i$. If $S$ has an order at time $\beta_i$ then
that order is joined by each of vertex $i$'s retailers, at cost $1+3 = 4$;
also, each of $i$'s retailers joins its own order (at a time $\alpha_{j,i}$ where $e_j\ni i$), which costs $2$.
Thus the cost associated with vertex $i$ is $4 + 2\cdot 3 = 10$.
Otherwise (that is, when $S$ has no order at time $\beta_i$),
each of $i$'s three retailers joins the three support-gadget orders,
so the cost associated with $i$ is $3\cdot 3 = 9$.
Putting it together, and letting $\ell$ be the number of vertices $i$ that have an order at time $\beta_i$,
the total cost associated with all vertices can be 
written as $10\ell + 9(n-\ell) = 9n + \ell$.

By \lref[Condition]{item:nf2}, the cost associated with edges is as follows.
For each edge $e_j=\{i,i'\}$, \lref[Condition]{item:nf4} guarantees that one of
$i$ or $i'$ satisfies \lref[Condition]{item:nf3.1}. By symmetry, assume that it is $i$.
Since $\rho_{j,i}$ already has an order at $\alpha_{j,i}$, we can
have retailer $e_j$ join this order at cost $1$. 
In total, the additional cost associated with the edge gadgets is $m = 1.5n$.

In sum, the schedule costs $6 + (9n+\ell) + 1.5n = 10.5n + \ell + 6$. Hence, $\ell = K$.

Now define $U$ to contain the $\ell$ vertices $i$ for which $S$ makes an
order at time $\beta_i$. For each edge $e_j=\{i,i'\}$,
by~\lref[Condition]{item:nf4}, there is an order at one of $e_j$'s associated times, say $\alpha_{j,i}$.
By~\lref[Condition]{item:nf3}, there is also an order at time $\beta_i$,
so, by definition, $U$ contains vertex $i$.
Thus, $U$ is a vertex cover.
\end{proof}

Here is the proof of $\APX$-hardness.
Recall that $\JRPDEfour$ is $\JRPD$
restricted to instances with equal-length demand periods and at most four demands.


\begin{theorem}\label{thm: apx-hardness}
$\JRPDEfour$ is $\APX$-hard.
\end{theorem}

\begin{proof}
Vertex Cover in cubic graphs is $\APX$-hard~\cite[\S 3]{alimonti-apx-complete-cubic}.
We give a PTAS-reduction from that problem to $\JRPDEfour$. 

Given any cubic graph $G$ with $n\ge 6$ vertices, and any $\epsilon>0$,
compute the instance $\JG$ from \lref[Lemma]{lem:cover-to-schedule}.
By inspection of the proof, in $\JG$ all demand periods have equal length
and (because $G$ has degree three) each retailer has at most four demands,
so $\JG$ is an instance of $\JRPDEfour$.

Now suppose we are given any $(1+\epsilon/24)$-approximate solution $S$ to $\JG$.
From $S$, compute a~vertex cover $U$ for $G$ using the computation from \lref[Lemma]{lem:schedule-to-cover}.  
The computations of $\JG$ from $G$, and of $U$ from $S$, can be done in time polynomial in $n$.
To finish, we show that the vertex cover $U$ has size at most $(1+\epsilon)K^*$,
where $K^*$ is the size of the optimal vertex cover in $G$.

By \lref[Lemma]{lem:cover-to-schedule}, $\JG$ has an order schedule of cost at most $10.5n+K^*+6$.
Since $G$ is cubic,  $K^* \geq m/3 = n/2$, so $10.5n+6 \le 11.5 n \le 23 K^*$.
Thus $S$ has cost at most

\begin{align*}
(1+\epsilon/24)(10.5n+K^*+6)
		~&=~ 10.5n+K^*+6 + (\epsilon/24) (10.5n+K^*+6)
		\\	
		~&\le~ 10.5n + K^\ast + 6 + (\epsilon/24) (23K^*+K^*)
		\\
		~&=~ 10.5n +  (1 + \epsilon) K^* + 6.
\end{align*}

Since all costs are integer, the cost of $S$ is in fact at most
$10.5n + K + 6$, where $K = \floor{ (1 + \epsilon) K^* }$.
Using this bound  and \lref[Lemma]{lem:schedule-to-cover},
the vertex cover $U$ has size at most $K \le (1+\epsilon) K^*$.
\end{proof}

Since $\JRPDEfour \in \APX$ (it has a constant-factor approximation algorithm),
the theorem implies that $\JRPDEfour$ is $\APX$-complete.

Of course, $\APX$-hardness implies that, unless $\P = \NP$,
there is no PTAS for $\JRPDEfour$:
that is, for some $\delta>0$,
there is no polynomial-time $(1+\delta)$-approximation algorithm for $\JRPDEfour$.



\section{Final Comments}
\label{sec: final comments}


The integrality gap for the standard $\JRPD$ LP relaxation is between 1.245 and 1.574.
We conjecture that neither bound is tight. Although we do not have a formal
proof, we believe that our
refined distribution for the tally game given here is optimal: it was
optimized under the assumption that it never generates more than two samples,
and allowing more than two samples, according to our calculations, can only
increase the value of $\statistic \distribution$. Thus
improving the upper bound will likely require a different approach.

There is a simple algorithm for $\JRPD$ that provides a $(1,2)$-approximation, meaning
that its warehouse order cost is not larger than that in the optimum, while
its retailer order cost is at most twice that in the optimum \cite{jrp-deadlines-nonner}.
One can combine that algorithm and the one here by choosing each algorithm with a
certain probability. This simple approach does not improve the approximation ratio,
but it may be possible to do so if, instead of using the algorithm presented here, 
one appropriately adjusts the probability distribution.

The computational complexity of general $\JRPD$, 
as a function of the maximum number $p$ of demand periods of each retailer,
is essentially resolved: for $p\ge 3$ the problem is
$\APX$-hard \cite{jrp-deadlines-nonner}, while for $p\le 2$ it can be solved in
polynomial time (for $p=1$ it can be solved with a greedy algorithm; 
for $p=2$ one can apply a dynamic programming algorithm
similar to that used in the proof of \lref[Lemma]{lem:3-bounded-optimal}).
For the case of equal-length demand periods, we showed that the problem remains
$\APX$-hard for $p\ge 4$. It would be nice to settle the case $p=3$,
which remains open.  We conjecture that this case is also $\NP$-complete.

Finally, we note that any LP-based algorithm for $\JRPD$ can be used as 
a building block for general $\JRP$ (with arbitrary waiting 
costs)~\cite{jrp-soda-2014}. The construction 
combines 
One-Sided Retailer Push and Two-Sided Retailer Push algorithms~\cite{jrp-owmr-levi-journal} 
with 
an appropriately crafted and scaled instance of $\JRPD$. By plugging our $1.574$-approximation to solve the 
$\JRPD$ instance, the algorithm of~\cite{jrp-soda-2014} yields 
a $1.791$-approximation for JRP.



\myparagraph{Acknowledgements.} 
We would like to thank {\L}ukasz Je{\.z}, Dorian Nogneng, Ji\v{r}\'{\i} Sgall,
and Grzegorz Stachowiak for stimulating discussions and useful comments.  We
are also grateful to anonymous reviewers of earlier versions of this manuscript
for pointing out several mistakes and suggestions for improving the
presentation.


\bibliographystyle{alpha}
\bibliography{references}


\appendix

\section{Wald's Lemma}
\label{sec: walds lemma}



Here is the variant of Wald's Lemma 
(also known as Wald's identity, and a consequence of standard ``optional stopping'' theorems)
that we use in \lref[Section]{sec: upper bounds}.
The proof is standard; we present it for completeness.


\begin{lemma}[Wald's Lemma]\label{lem:wald_equation}
Consider a random experiment that,
starting from a fixed start state $S_0$,
produces a random sequence of states $S_1,S_2,S_3,\ldots$
Let random index $T\in\{0,1,2,\ldots\}$ be a~stopping time for the sequence
(that is, for each positive integer $t$, the event ``$T \leq t$''
is determined by state $S_t$).
Let function $\phi:\{S_t\} \rightarrow \reals$ map the states to $\reals$.
Suppose that, for some fixed constants $\xi$ and $F$,

\smallskip
\noindent
{\rm (i)}  \((\forall t<T)~ \Exp[\phi(S_{t+1})~|~ S_t] \ge \phi(S_t)+\xi\),

\smallskip
\noindent
{\rm (ii)} either 
$(\forall t<T)~ \phi(S_{t+1})-\phi(S_t) \ge F$ in all outcomes,
or $(\forall t<T)~ \phi(S_{t+1})-\phi(S_t) \le F$ in all outcomes, and

\smallskip
\noindent
{\rm (iii)} $T$ has finite expectation.

\smallskip
\noindent
Then, $\Exp[\phi(S_T)]\, \ge \,\phi(S_0) + \xi\, \Exp[T]$.
\end{lemma}

\begin{proof}
For each $t\ge 0$, define random variable $\delta_t = \phi(S_{t+1})-\phi(S_{t})$.
By assumption (i), $\Exp[\delta_{t}~|~S_t] \ge \xi$ for $t<T$.
Since the event ``$T > t$'' is determined by $S_t$,
this implies that $\Exp[\delta_{t}~|~T>t] \ge \xi$. Then the inequality in the 
lemma can be derived as follows:
\begin{align*}
\textstyle
\Exp[\phi(S_T) - \phi(S_0)] ~=~ \Exp[ \sum_{t<T} \delta_t ]
	&=~
	\textstyle
	\sum_{\tau\ge 0} \Prob[T=\tau]\cdot\Exp[\sum_{t<\tau} \delta_t ~|~ T=\tau]
	\\
	&=~
	\textstyle
	\sum_{\tau \ge 0} \sum_{t<\tau} \Prob[T=\tau]\cdot\Exp[ \delta_t ~|~ T=\tau]
	\\
	&=~
	\textstyle
	\sum_{t \ge 0} \sum_{\tau>t} \Prob[T=\tau]\cdot\Exp[\delta_t ~|~ T=\tau]
	\\
	&=~
	\textstyle
	\sum_{t\ge 0} \Prob[T>t]\cdot\Exp[ \delta_t ~|~ T> t]
	\\
	&\ge~
	\textstyle
	\sum_{t\ge 0} \Prob[T>t]\cdot\xi
        &
	\\
	&=~
	\xi\, \Exp[T],
\end{align*}
Exchanging the order of summation in the third step above
does not change the value of the sum, 
because (by assumptions (ii) and (iii))
either the sum of all negative terms
is at least
$\sum_{\tau\ge 0} \sum_{t<\tau} \Pr[T=\tau] F
 ~=~ F  \sum_{\tau\ge 0} \tau \Pr[T=\tau] 
 ~=~ F\,\Exp[T]$,
which is finite, or (likewise) the sum of all positive terms is finite.
\end{proof}
 
Each application in \lref[Section]{sec: upper bounds} has $\xi \geq \statistic
\distribution > 0$ and $\phi(S_T) - \phi(S_0) \le U$ for some fixed~$U$. In
this case Wald's Lemma implies 
$\Exp[T] \leq U/\xi \leq U/\statistic \distribution$.


\end{document}